\newtheorem{theorem}{Theorem}
\newtheorem{lemma}{Lemma}
\newtheorem{remark}{Remark}
\renewcommand\hl[1]{#1}
\def \IncludeAppendices{True}
\begin{document}
\title{Degrees of Freedom of the Bursty MIMO X Channel with Instantaneous Topological Information} 

\author{
\IEEEauthorblockN{
Shih-Yi Yeh\IEEEauthorrefmark{1}\IEEEauthorrefmark{2} and I-Hsiang Wang\IEEEauthorrefmark{1}\IEEEauthorrefmark{3}
}
\IEEEauthorblockA{
\IEEEauthorrefmark{1}Graduate Institute of Communication Engineering, 
\IEEEauthorrefmark{3}Department of Electrical Engineering\\
National Taiwan University, Taipei, Taiwan\\
\IEEEauthorrefmark{2}Email: \url{steven0416@gmail.com}
\IEEEauthorrefmark{3}Email: \url{ihwang@ntu.edu.tw}
}
}

\maketitle

\begin{abstract}
We study the effects of instantaneous feedback of channel topology on the degrees of freedom (DoF) of the bursty MIMO X channel, where the four transmitter-receiver links are intermittently on-and-off, governed by four independent Bernoulli $(p)$ random sequences, and each transmitter and receiver are equipped with $M$ and $N$ antennas, respectively.  We partially characterize this channel:  The sum DoF is characterized when $p\le \frac{1}{2}$ or when $\frac{\min(M,N)}{\max(M,N)} \le \frac{2}{3}$.  In the remaining regime, the lower bound is within $5.2\%$ of the upper bound.  Strictly higher DoF is achieved by coding across channel topologies.  In particular, codes over as many as $5$ topologies are proposed to achieve the sum DoF of the channel when $p\le \frac{1}{2}$.  A transfer function view of the network is employed to simplify the code design and to elucidate the fact that these are space-time codes, obtained by interference alignment over space and time.

\end{abstract}

\ifdefined\IncludeAppendices
\else
\textit{A full version of this paper is accessible at:\\} \url{http://homepage.ntu.edu.tw/%7Eihwang/Eprint/isit18bx.pdf}
\fi

\section{Introduction}


Interference is a critical issue in wireless communication, limiting the capacity of a  network, and the two-user-pair interference channel (IC) has been a canonical model for studying the capacity of interference networks.  The degrees of freedom (DoF) of an MIMO IC with three antennas at each terminal, for instance, is only $3$, instead of $6$ when there is no interference between the two pairs of users \cite{JF07}.  An interesting discovery is made in \cite{MMK06}--{\hspace{1sp}}\cite{MMK08}, however, that the sum DoF of this network can be easily increased to $4$ by simply allowing cross messaging between the two pairs of users, and this network is referred to as X channel (XC) in the literature.  Recently, however, questions were raised as to how the capacity of an interference network would change when the links between the transmitters and receivers exist only intermittently due to frequency hopping, shadowing, co-channel interference, ... etc, c.f. \cite{WSDV13}, \cite{YW18a}, \cite{VMA14} among others.  This conceptually simple change turns out to have profound implications.  Take the bursty MIMO XC for example \cite{YW18a}. Burstiness of the channel disrupts the network topology, turning the XC into a \emph{new} network with $16$ different topologies.  This significantly changes its channel capacity and achievability schemes, and greatly complicates the characterization of the sum DoF.

Availability of channel state information at the transmitters (CSIT) is long known to have a great impact on the channel capacity.  Interestingly, it is also discovered in \cite{MAT12} that delayed CSIT is still very useful, even if it is completely stale.  This motivates a sequence of works to further explore the benefits of delayed CSIT, including \cite{GMK11}--{\hspace{1sp}}\cite{KA17} for the IC and XC.  Moreover, for networks with time-varying topology, communication rate gains have been reported even with only topological information at the transmitters \cite{SGJ13}, \cite{LKA16}.  The highlight of the achievability schemes in these works is coding over multiple channel uses or topologies.  This motivates us to consider how channel topology information at the transmitters (CTIT) may be used to enhance the achievable rates on the bursty MIMO XC.  As a first step, we consider instantaneous feedback of channel topology to the transmitters in this work.

Unlike \cite{GMK11}--{\hspace{1sp}}\cite{LKA16}, where the channel matrices are time-varying, we study the bursty MIMO XC whose channel matrices are drawn from a continuous distribution and are fixed throughout the communication.  The only time-varying component in this channel is its topology, which is assumed known to the receivers and is fed back to transmitters instantaneously.  Each transmitter and each receiver are equipped with $M$ and $N$ antennas, respectively.  The four links between the transmitters and receivers are on-and-off intermittently, governed by four independent Bernoulli $(p)$ random sequences, similar to the model in \cite{VMA14}.  For this bursty MIMO XC, we ask these questions:  How may we exploit the topology feedback to achieve higher DoF?  What is its sum DoF?  How does it compare to the case where there is no topology feedback \cite{YW18a} or no cross-link messaging \cite{VMA14}?

Our key findings are the following:  First, strictly higher DoF can be achieved on this bursty MIMO XC by coding across channel topologies.  In particular, sophisticated codes across as many as $5$ topologies prove beneficial on this channel.  This in contrast to the simpler codes for the channels considered in \cite{LKA16}, \cite{SGJ13}, or \cite{VMA14}.  Secondly, the search of DoF-optimal codes by trials-and-errors is prohibitive due to the large space of coding possibilities for this channel.  The \emph{transfer function view} of the \emph{parallel} channel across topologies, on the other hand, affords a systematic approach that dramatically reduces the effort of code design and makes it much more manageable.  It also elucidates the fact that these are space-time codes, obtained by interference alignment over space and time.   A similar observation of the interference alignment interpretation is also made in \cite{MAT12} albeit for the broadcast channel. Thirdly, armed with these codes, we give a partial characterization of the sum DoF of this channel.  The sum DoF is determined when $p\le \frac{1}{2}$, or when the antenna ratio $r$, defined as $\frac{\min(M,N)}{\max(M,N)}$, is no greater than $\frac{2}{3}$.  When $r > \frac{2}{3}$ and $p > \frac{1}{2}$, the sum DoF is not fully characterized.  However, we provide a lower bound that is within $5.2\%$ of the upper bound, in the worst case.  Figure \ref{fig:DoF_Compare} illustrates the sum DoF of this channel, the benefits of coding across topologies, and how the sum DoF of the channel varies when topology feedback or cross-link messaging is not allowed.


\begin{figure}[t]
\centering
\includegraphics[width=7cm]{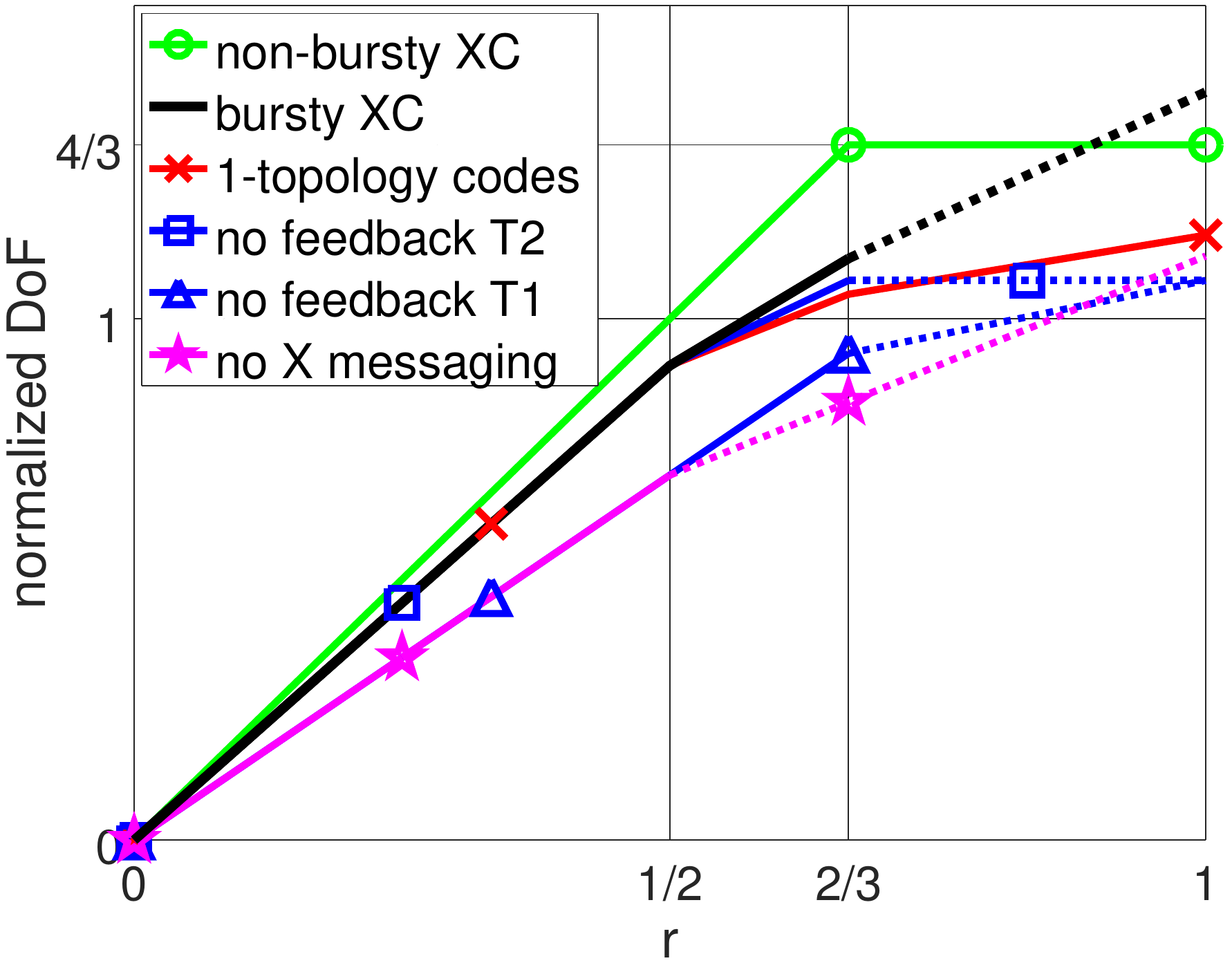}
\caption{The normalized sum DoF of the channel when $p=0.7$ (The DoF normalization is against $\max(M,N)$.  1-topology codes: No coding across topologies.  T2: $M\ge N$. T1: $M \le N$.  Dashed lines: lower bounds.)}
\label{fig:DoF_Compare}
\end{figure}

\section{Problem Formulation} \label{Prob}


The system model of the bursty MIMO XC is depicted in Figure \ref{fig:BXC_Model}.  There are two transmitters and two receivers in the system, denoted by Tx$i$ and Rx$j$, respectively, for $i, j\in\{1, 2\}$.  Each transmitter is equipped with $M$ antennas, while each receiver has $N$ antennas.  $M_{ji}\sim\mathrm{Unif}\{1,2,\ldots,2^{nR_{ji}}\}$ denotes the message from Tx$i$ to Rx$j$, \hl{encoded over a code block of $n$ symbols with code rate $R_{ji}$}, and $\hat M_{ji}$ is the decoded message at Rx$j$.  $X_i$ represents the signal transmitted by Tx$i$ and $Y_j$ is the received signal at Rx$j$.  Each transmitter has an average transmit power constraint $P$, i.e. $\frac{1}{n}\sum_{k=1}^n{\|X_i[k]\|^2} \le P, i\in \{1, 2\}$, where $X_i[k]$ denotes the $k$-th transmitted symbol of Tx$i$.  $H_{ji}$ models the $N\times M$ channel matrix from Tx$i$ to Rx$j$.  To simplify the notations in Section \ref{Achi} and \ref{Tran}, we assign the following aliases: $H_1=H_{11}, H_2=H_{12}, H_3=H_{21}$ and $H_4=H_{22}$. The channel matrices are drawn randomly from a continuous distribution with i.i.d. elements, but are fixed during the transmission.  Each transmitter or receiver is assumed to have perfect knowledge of all channel matrices.  $Z_j$ is the additive Gaussian noise at Rx$j$ with zero mean and unit variance, i.i.d. in time.

The four Tx-Rx links are intermittently on and off, controlled by four independent and identically distributed Bernoulli $(p)$ random sequences, $S_{11}[k]$, $S_{12}[k]$, $S_{21}[k]$, and $S_{22}[k]$.  The link from Tx$i$ to Rx$j$ is on with probability $p$ at the $k$-th time instant when $S_{ji}[k]=1.$  The link is off if $S_{ji}[k]=0$.  For convenience we define $q\triangleq 1-p$, and for brevity of notation, we may drop the dummy time index ($k$) hereafter and abbreviate $S_{ji}[k]$ as $S_{ji}$ when there is no confusion.  Each receiver has perfect knowledge of the burstiness of the two incoming links, e.g. Rx$1$ knows $S_{11}$ and $S_{12}$, and feeds this topological information back to both transmitters instantaneously.

A rate tuple $(R_{11}, R_{12}, R_{21}, R_{22})$ is said to be achievable on the bursty MIMO X channel if there exists a sequence of codes such that $\mathcal P\{\hat M_{ji} \ne M_{ji}, \textrm{ for some } i, j\in\{1,2\}\}$ converges to zero as the block length of the codes tends to infinity.  The capacity region of the channel is the set of all achievable rate tuples $(R_{11}, R_{12}, R_{21}, R_{22})$\hl{, and the sum capacity of the channel, $C_\mathrm{sum}$, is the supremum of the achievable sum rates $(R_{11} + R_{12}+ R_{21}+ R_{22})$}.  The sum DoF, $\eta$, of the channel follows conventional definition, i.e.
\begin{equation} \label{eq:2_1}
\eta \triangleq \lim_{P\to\infty} \frac{C_\mathrm{sum}}{(\frac{1}{2})\log(P)}.
\end{equation}
For convenience, we also define the \emph{normalized} sum DoF to be $\eta/\max(M,N).$

In this paper we evaluate the sum DoF of the channel in the almost surely (a.s.) sense, since the channel matrices are drawn from a continuous probability distribution as in \cite{JS08}.

\begin{figure}[t]
\centering
\includegraphics[width=8cm]{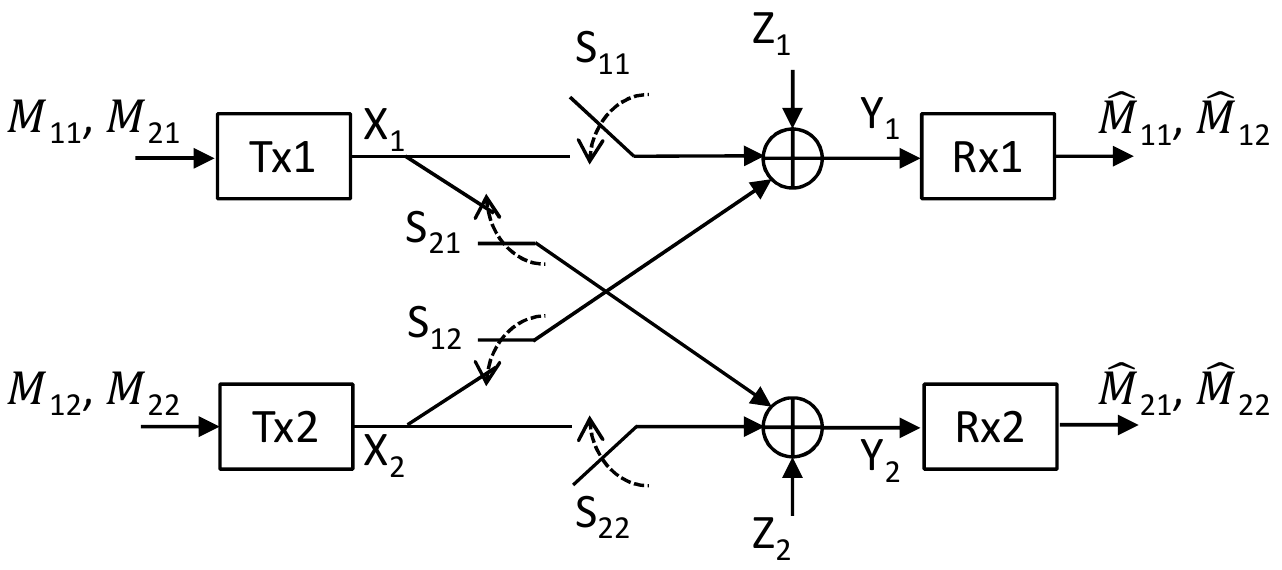}
\caption{Model of the bursty MIMO X channel}
\label{fig:BXC_Model}
\end{figure}

\section{Main Results} \label{Main}


The normalized sum DoF of the bursty MIMO XC with instantaneous feedback of channel topology is characterized and bounded by the following two theorems.

\begin{theorem} \label{thm:DoF}
The normalized sum DoF of this channel, when $r\le \frac{2}{3}$ or $p \le \frac{1}{2}$, is given by
\begin{equation*}
\left\{ \begin{aligned}[rl]
2rp(1+q), & \quad r \le \frac{1}{2} \\
2r(p^2+2pq^2)+2p^2q,& \quad r> \frac{1}{2}\textrm{ and } (r\le \frac{2}{3}\textrm{ or } p\le \frac{1}{2}).\\
\end{aligned} \right.
\end{equation*}
\end{theorem}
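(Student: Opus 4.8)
The characterization in Theorem~\ref{thm:DoF} is exact in the stated regimes, so the plan is to establish a matching inner bound (achievability) and outer bound (converse). Since the expressions depend on $M,N$ only through $r$, it suffices to treat one ordering, say $M\le N$ so that $r=M/N$; the other follows by the reciprocity of the Gaussian X channel. Because the four topology sequences $S_{ji}[k]$ are i.i.d.\ Bernoulli$(p)$, over a block of $n$ channel uses the empirical frequency of every topology, and of every short pattern of consecutive topologies, concentrates almost surely on its probability as $n\to\infty$. The first move is therefore to reduce the DoF computation to a per-class problem: partition the channel uses by their realized topology pattern, design a code within each class, and write the total normalized DoF as a probability-weighted sum of the per-class contributions. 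The polynomial structure of both branches, $2rp(1+q)$ and $2r(p^2+2pq^2)+2p^2q$, is exactly such a weighted sum over the dominant patterns.

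For the inner bound I would build linear space-time codes on the \emph{parallel} channel formed by the grouped slots and analyze them through the transfer-function view of Section~\ref{Tran}, using the constructions of Section~\ref{Achi}. Stacking the per-slot precoders into a single block matrix recasts interference alignment as a collection of rank and subspace conditions on products of the fixed random matrices $H_1,\dots,H_4$; since these are drawn from a continuous distribution, a generic-rank argument in the spirit of \cite{JS08} guarantees almost surely that the desired subspaces are full-dimensional and the interference subspaces collapse. The two regimes demand different templates: for $r\le\tfrac{1}{2}$ the receiver (or transmitter) has surplus spatial dimensions and one- and two-slot schemes already yield the factor $p(1+q)$, the extra $pq$ arising from slots in which a desired link is on while its interferer is off; for $r>\tfrac{1}{2}$ the spatial resource is tight, pure spatial zero-forcing fails, and genuine codes spanning as many as five topologies are required, whose contributions assemble into $2r(p^2+2pq^2)+2p^2q$.

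For the outer bound the plan is to start from Fano's inequality and enhance the channel so that its DoF is both computable and still an upper bound on the original. Natural enhancements are to let the two receivers cooperate, or to hand one receiver's observation together with the full topology sequence to the other as genie side information; conditioned on the topology, the enhanced channel then reduces to MIMO multiple-access or broadcast components whose DoF is dictated by the number of antennas activated by the Bernoulli states. Averaging these conditional bounds over the topology distribution reproduces the two expressions. The delicate point is to choose side information tight enough to match the achievability in the sub-regime $r>\tfrac{1}{2}$, $r\le\tfrac{2}{3}$ (or $p\le\tfrac{1}{2}$), where a naive bound is loose; outside this sub-regime the bounds need not meet, which is precisely why Theorem~\ref{thm:DoF} is silent there.

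The principal obstacle I anticipate is on the achievability side: exhibiting the five-topology codes and proving that their alignment conditions are \emph{simultaneously} feasible for generic channels. The search is combinatorial over the $16$ topologies, and the transfer-function formulation is exactly the tool that makes it manageable; even so, verifying that a candidate choice of precoders yields full-rank desired signal spaces while perfectly aligning interference across the stacked slots requires a careful dimension count coupled with the almost-sure genericity argument. Pairing this with a converse that is tight in the same $r>\tfrac{1}{2}$ sub-regime is the second place where the argument must be handled with care.
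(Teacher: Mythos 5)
Your outer bound is where the proposal genuinely breaks. Both enhancements you propose---full receiver cooperation, or handing $(Y_2^n,S^n)$ to Rx$1$ as a genie---are DoF-equivalent to cooperation, and cooperation is provably loose in exactly the sub-regime $r>\frac{1}{2}$ that the theorem still covers. Concretely, take $N\le M$ and evaluate the cooperative bound state by state: the all-on and three-link states each contribute $2N$, the six two-link states contribute $N,N,2N,2N,\min(2N,M),\min(2N,M)$, and the one-link states contribute $N$, which for $N>M/2$ sums to $2Np^4+8Np^3q+(6N+2M)p^2q^2+4Npq^3$. The claimed DoF (in unnormalized form) $2N(p^2+2pq^2)+2Mp^2q$ expands to $2Np^4+4Np^3q+(6N+2M)p^2q^2+4Npq^3+2Mp^3q$, so cooperation overshoots by $(4N-2M)p^3q>0$. (For $r\le\frac{1}{2}$ the two expressions both equal $2Np(1+q)$, so only that branch of your converse survives.) You acknowledge that a ``naive bound is loose'' here, but you supply no mechanism to repair it, and that mechanism is the substance of the paper's converse (Appendix \ref{App_A}): the genie is the cross \emph{message} $M_{21}$, not the other receiver's observation; one writes $n(R_{11}+R_{12}-\epsilon_n)\le h(Y_1^n\mid S^n,M_{21})-h((S_{12}H_{12}X_2+Z_1)^n\mid S^n,M_{12})$, pairs the negative cross-link entropy of each such bound against the positive term of the symmetric bound through the chain (\ref{eq:A_3})--(\ref{eq:A_4}), and only then single-letterizes and enumerates the eight realizations of $(S_{11},S_{12},S_{21})$. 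No reduction to ``MAC/BC components'' substitutes for this entropy-difference pairing.

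On achievability, your architecture (parallel channel across topologies, transfer-function view, generic-rank arguments) matches the paper, but the plan misallocates the codes and omits the constructive core. For $\frac{1}{2}<r\le\frac{2}{3}$ no five-topology code is needed: the two-topology codes on $\{z_1,z_2\}$ and $\{z_3,z_4\}$ of Lemma \ref{lemma1} (interference-nulling beamformers $\phi_i$ with $H_i\phi_i=0$) already achieve $2N+M$ per pair and yield the claimed expression. The five-topology $\{z,f\}$ code of Lemma \ref{lemma2} (nulling combined with the alignment $H_1G_1=H_2G_2$, decoded by three-step successive cancellation) is required only for $\frac{2}{3}<r\le 1$, and that is also the only place the hypothesis $p\le\frac{1}{2}$ enters---a point your write-up never identifies: $p\le\frac{1}{2}$ gives $p^4\le p^3q$, so the $f$-slots can be entirely consumed by $\{z,f\}$ codes and the leftover $z_i$-slots handled by Lemma \ref{lemma1}, which is precisely why the achievable DoF collapses to the same polynomial as in the $r\le\frac{2}{3}$ case. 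Finally, your reduction to a single ordering ``by reciprocity of the Gaussian X channel'' is unsupported---no such duality is established for a bursty X channel with instantaneous topology feedback---and the paper instead constructs explicit dual schemes for $M\le N$ (Appendix \ref{App_C}), trading transmit-side nulling for surplus receive dimensions.
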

(Recall that $r\triangleq \frac{\min(M,N)}{\max(M,N)}$ and $q\triangleq 1-p$.)

\begin{theorem} \label{thm:UBLB}
When $r> \frac{2}{3}$ and $p>\frac{1}{2}$, the normalized sum DoF is upper bounded by $\min(\eta^\mathrm{ub}_1, \eta^\mathrm{ub}_2)$, where
\begin{align*}
\eta^\mathrm{ub}_1 &\triangleq 2r(p^2+2pq^2)+2p^2q,  \\
\eta^\mathrm{ub}_2 &\triangleq 4rpq + \frac{4}{3}p^2,
\end{align*}
and is lower bounded by $\eta^\mathrm{lb}$, given by
\begin{equation*}
\eta^\mathrm{lb} \triangleq rpq(4q^2+6p)+2p^2q+\frac{4}{3}(p^4-p^3q).
\end{equation*}
\end{theorem}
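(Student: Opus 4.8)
The plan is to prove the two bounds separately, since the converse and achievability are governed by different mechanisms. For the converse I would establish $\eta^{\mathrm{ub}}_1$ and $\eta^{\mathrm{ub}}_2$ independently and take the minimum. The bound $\eta^{\mathrm{ub}}_1$ coincides with the exact normalized sum DoF in the $r>\frac12$ branch of Theorem~\ref{thm:DoF}, so I expect the genie-aided converse used there to carry over verbatim to the regime $r>\frac23,\ p>\frac12$: that argument should never invoke the restriction $r\le\frac23$ or $p\le\frac12$, which are needed only for the matching achievability. Concretely, I would condition on the Bernoulli topology sequence, hand each receiver the instantaneous states together with a genie signal that lets it reconstruct the co-located interference, then average the resulting per-topology single-user MIMO bounds against the link-activity probabilities; the weights $p^2$, $pq^2$, and $p^2q$ are exactly the probabilities of the topology classes that survive this reduction.

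The second converse $\eta^{\mathrm{ub}}_2=4rpq+\frac43 p^2$ is the genuinely new ingredient, and I would build it as the bursty analogue of the classical interference-alignment converse for the non-bursty MIMO X channel. As a sanity check, at $p=1,\ r=1$ it collapses to $\frac43$, the known sum DoF of the square MIMO X channel, which tells me the $\frac43 p^2$ term must come from an entropy inequality that forces the cross-messages to share signal dimensions on the states where the relevant pair of links is simultaneously active (probability $p^2$), while the $4rpq$ term collects the states in which exactly one of a receiver's two links is on, so that receiver degenerates to a point-to-point $\min(M,N)$-dimensional link. I would therefore split the expected received entropy by topology, apply the aligned-interference bound on the both-on states and a simple rank bound on the one-on states, and assemble the pieces.

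For the lower bound I would use the transfer-function view of the parallel channel across topologies advertised in the introduction, and realize $\eta^{\mathrm{lb}}$ as a sum of contributions from disjoint topology classes, each achieved by a dedicated space-time interference-alignment code. Reading off the terms, I expect $\frac43(p^4-p^3q)=\frac43 p^3(p-q)$ to come from the classical $\frac43$-type X-channel scheme run on the fully-connected topology, $2p^2q$ from a scheme on the states with three active links, and $rpq(4q^2+6p)$ from the codes spanning up to five topologies, which exploit the single-link ($\min(M,N)$-dimensional) opportunities created by burstiness together with the instantaneous topology feedback. For each class I would design precoders over a short block, verify that the desired streams occupy generic positions while the interference collapses onto a lower-dimensional subspace at every receiver, and argue that this holds almost surely because the channel matrices are drawn from a continuous distribution.

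The hard part will be the lower bound. The five-topology codes require aligning interference simultaneously across space (the antenna dimensions) and time (the topology slots), and the feasibility of that alignment is delicate once $r>\frac23$, which is precisely why this regime resists an exact characterization. I expect the obstacle to be twofold: first, constructing precoders whose images meet the dimension-counting constraints at each receiver under every topology in the block; and second, the bookkeeping needed to combine the per-class DoF with their Bernoulli weights without double-counting channel uses, so that the contributions sum exactly to $\eta^{\mathrm{lb}}$ — the appearance of $p^4-p^3q$ rather than $p^4$ is a telltale sign that channel uses of the full topology are being reallocated to the cross-topology codes, and getting this accounting right will be the crux.
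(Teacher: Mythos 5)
Your plan matches the paper's proof in all three components: $\eta^{\mathrm{ub}}_1$ is indeed obtained by reusing the Theorem~\ref{thm:DoF} converse (a genie-aided, topology-conditioned entropy bound that holds for all $r$ and $p$, the restrictions being needed only for achievability); $\eta^{\mathrm{ub}}_2$ is proved exactly as you suggest, as the bursty analogue of the classical X-channel converse---the paper bounds each sum of three rates (e.g.\ $R_{12}+R_{21}+R_{22}$) by splitting the entropies over topology states, and the factor $\frac{4}{3}$ arises from combining the four such triples; and $\eta^{\mathrm{lb}}$ is achieved by combining the five-topology $\{z_1,z_2,z_3,z_4,f\}$ space-time interference-alignment-and-nulling code (Lemma~\ref{lemma2}, designed via the transfer-function view) with per-topology optimal codes, where, as you correctly infer, the weight $p^4-p^3q$ appears because $f$-states are consumed by the cross-topology code at rate $p^3q$ and only the leftover ones run the classical $\frac{4}{3}$ scheme. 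One minor correction to your term-by-term reading of $\eta^{\mathrm{lb}}$: there is no separate scheme for the three-link states---the $z_i$ states are entirely absorbed into the five-topology code---and the $2p^2q$ term aggregates the $2M$-dimensional (broadcast-type) contributions of both the two-link states and the $z_i$ states inside that code, via $2p^2q=2p^2q^2+2p^3q$, while $rpq(4q^2+6p)$ collects the $N$-dimensional contributions across all classes.
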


\begin{remark}
It is easily verified that $\eta_\mathrm{lb}$ is within $5.2\%$ of $\min(\eta^\mathrm{ub}_1, \eta^\mathrm{ub}_2)$, and the maximum gap occurs  when $r\simeq 0.81$ and $p\simeq 0.77$.
\end{remark}

The sum DoF of this bursty MIMO XC has the following properties, as illustrated in Figure \ref{fig:DoF_Compare}:
\begin{enumerate}
\item The sum DoF of the bursty MIMO XC can be larger than that of the non-bursty channel when $r$ is large, e.g. $r\simeq 1$.  In contrast, without transmitter knowledge of channel topology (CTIT), the best known achievable sum DoF of the bursty channel is always lower.
\item However, when $r\le \frac{2}{3},$ burstiness of the channel, i.e. $p<1$, always reduces the sum DoF of the channel. 
\item Coding across channel topologies can lead to strictly higher sum DoF, but only when $r > \frac{1}{2}$.
\item When $r\le \frac{1}{2}$ and $M\ge N$, lack of CTIT does not decrease the sum DoF.
\item When $r\le \frac{1}{2}$ and $M\le N$, lack of CTIT and lack of cross messaging both lead to the same lower sum DoF.
\item Existence of cross-links can increase the sum DoF when the channel is bursty ($p<1$).
\end{enumerate}

\ifdefined\IncludeAppendices
We prove the achievability of Theorem \ref{thm:DoF} and $\eta_\mathrm{lb}$ of Theorem \ref{thm:UBLB} for $M\ge N$ in the next section.  The rest of the proof can be found in the appendices, including the converse proof and the $M \le N$ case.
\else
Due to the limitation of space, we prove the achievability of Theorem \ref{thm:DoF} and $\eta_\mathrm{lb}$ of Theorem \ref{thm:UBLB} for $M\ge N$ in this paper.  The rest of the proof can be found in the full version of this paper, including the converse proof and the $M \le N$ case.
\fi

\section{Achievability Schemes and DoF Lower Bounds} \label{Achi}


\begin{figure}[t]
\centering
\includegraphics[width=9cm]{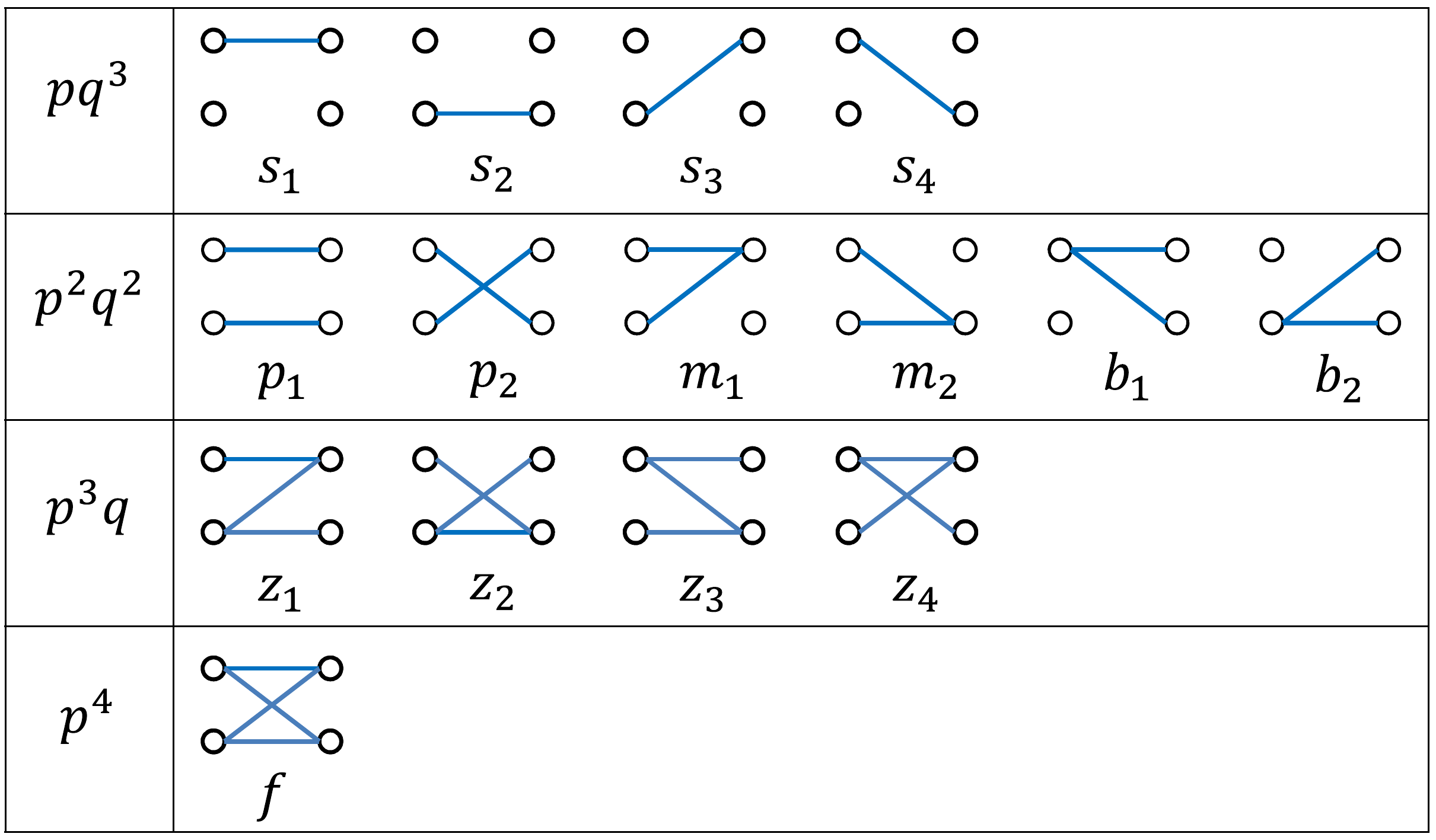}
\caption{Topologies of the Bursty MIMO XC (Topologies on the same row have the same probability, indicated in the leftmost column.)}
\label{fig:BXC_States}
\end{figure}

In this section, we present the coding schemes and prove achievability of the sum DoF given by Theorem \ref{thm:DoF} and the $\eta^\mathrm{lb}$ lower bound of Theorem \ref{thm:UBLB} with $M\ge N$.  Key to the proof are the following two lemmas which establish the sum DoF of two parallel MIMO channels, each consisting of a subset of the topologies illustrated in Figure \ref{fig:BXC_States}.

\begin{lemma} \label{lemma1}
For the $\{z_1, z_2\}$ parallel MIMO channel consisting of the $z_1$ and $z_2$ topologies, $2N+M$ sum DoF is achievable (a.s.), when $\frac{1}{2} <\frac{N}{M} \le 1$.
\end{lemma}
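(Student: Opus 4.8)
The plan is to exhibit an explicit linear space-time code over the two-slot parallel channel formed by the $z_1$ and $z_2$ topologies, and to show by a genericity argument that it achieves $2N+M$ degrees of freedom almost surely. First I would set up the input-output relation in the transfer-function view. Stacking the two channel uses, the signal of $\mathrm{Tx}i$ is a vector $X_i=(X_i[1],X_i[2])$ in a $2M$-dimensional space and the observation of $\mathrm{Rx}j$ is $Y_j=(Y_j[1],Y_j[2])$ in a $2N$-dimensional space. Because the underlying channel matrices $H_1,\dots,H_4$ are fixed across time, the only thing that changes between the two slots is which links are switched on; hence the effective $2N\times 2M$ channel from $\mathrm{Tx}i$ to $\mathrm{Rx}j$ is block diagonal, with the two diagonal blocks equal to $H_{ji}$ or to $0$ according to whether the corresponding link is active in $z_1$ and in $z_2$ (read off from Figure~\ref{fig:BXC_States}). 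This is precisely the feature that makes space-time alignment possible: a link that is on in one topology and off in the other lets the transmitters route a stream into only one of the two slots at a given receiver.

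Next I would allocate the $2N+M$ streams among the four messages and design the precoders. The dimension counting that makes $2N+M$ plausible is a receiver bound: each receiver offers only $2N$ space-time dimensions, so the sum cannot exceed $4N$, and the hypothesis $N/M>\tfrac{1}{2}$ guarantees $2N+M\le 4N$, i.e. the target does not exceed what the two receivers can absorb. The design assigns desired streams to each message through beamforming matrices chosen so that (i) at each intended receiver the desired streams span independent directions, and (ii) the undesired streams (the two messages destined for the other receiver, which ride on the same physical signals) are aligned into, or confined to, a subspace whose dimension is small enough to leave room for the desired ones. The asymmetry between $z_1$ and $z_2$ is exploited here: the slot in which an interfering link is off is used to separate desired signals, while the slot in which it is on is used to overlap interference, so that alignment is carried out jointly over space and over the two time slots.

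With the precoders fixed, the third step is to verify decodability. At $\mathrm{Rx}j$ I would write the received vector as the sum of a desired part and an interference part, and show that the $2N\times 2N$ matrix formed by the desired beamformers together with a basis of the interference subspace is nonsingular; zero-forcing then recovers each receiver's intended streams, and summing the per-message stream counts gives $2N+M$. The power normalization and the noise are handled in the usual way, so the argument is entirely about the rank of deterministic matrices built from $H_1,\dots,H_4$.

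I expect the genericity step to be the main obstacle. The decodability matrices are polynomials in the entries of the random, continuously distributed channel matrices, and I must show that each relevant determinant is a \emph{nonzero} polynomial, so that its vanishing set has Lebesgue measure zero and full rank holds almost surely. The standard route is to exhibit a single admissible choice of $H_1,\dots,H_4$ (or to use the block structure of the transfer function to argue directly) at which the determinant is nonzero; combined with the fact that a nonzero polynomial cannot vanish on a positive-measure set, this yields the a.s.\ claim. The delicate point is to carry out the alignment and the separation \emph{simultaneously} without introducing a spurious linear dependence: one must check that the very alignment equations used to collapse interference do not also collapse part of the desired subspace, which is exactly what the differing on/off patterns of $z_1$ and $z_2$ are there to prevent.
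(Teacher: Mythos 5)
Your framework---stacking the two topologies into a single transfer function, counting $2N$ space--time dimensions per receiver, and closing with a nonzero-polynomial genericity argument---is the same one the paper works in, but there is a genuine gap: the code is never constructed, and the mechanism you describe cannot produce it. You ask for beamformers such that the desired streams are separable and the interference is ``aligned into, or confined to, a subspace whose dimension is small enough,'' with receiver-side zero-forcing doing the decoding. A dimension count shows there is no slack for this: in the $\{z_1,z_2\}$ channel, Tx$2$'s links to \emph{both} receivers are on in \emph{both} topologies, so every stream sent by Tx$2$ and not intended for a given receiver occupies a full dimension there no matter how it is spread over the two slots (a single interfering stream is rank one in the stacked space; cross-slot ``overlapping'' cannot compress it, and aligning two Tx$2$ streams onto each other at Rx$1$ is only possible by pushing their difference into the null space of $H_2$, i.e.\ it \emph{is} nulling). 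Writing $d_j$ for the streams decoded and $I_j$ for the interference dimensions at Rx$j$, the constraints $d_j+I_j\le 2N$ then force the total, for any scheme that does not exploit the transmit null spaces, to be at most $3N$, which is strictly less than $2N+M$ whenever $N<M$. So ``confined to a small subspace'' must mean ``confined to the zero subspace,'' and your proposal contains no mechanism that achieves this.

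The missing ingredient is transmit-side interference nulling, which is exactly what the hypothesis $\frac{1}{2}<\frac{N}{M}\le 1$ provides: the $N\times M$ cross matrices have $(M-N)$-dimensional null spaces, and the paper's scheme (Figure~\ref{fig:z1z2_code}) sends $M-N$ streams through $\phi_2$ with $H_2\phi_2=0$ (invisible at Rx$1$) and $M-N$ streams through $\phi_4$ with $H_4\phi_4=0$ (invisible at Rx$2$). This is combined with the topological nulls you did notice (Tx$1$'s signal in $z_1$ never reaches Rx$2$, and vice versa in $z_2$) and with one further structural trick you would also need: a common block $c$ of $2N-M$ streams sent by Tx$2$ in both slots, decoded by each receiver in the slot where it sees Tx$2$ alone and then cancelled in the other slot. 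The result is that each receiver ends up with exactly $2N$ desired, interference-free dimensions, for a total of $N+N+2(M-N)+(2N-M)=2N+M$ distinct streams. Once this structure is written down, the genericity step you flagged as the main obstacle is routine (the relevant square matrices, such as $H_2[\phi_4,\,G]$ restricted to the used columns, are generically nonsingular); the actual content of the lemma is precisely the construction your proposal leaves unspecified.
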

\begin{proof}
To prove this, we combine the strategy of coding across topologies in \cite{LKA16} with interference nulling beamforming in \cite{YW18a}, \cite{MMK08}.  As illustrated in Figure \ref{fig:z1z2_code}, the $\phi_2$ and $\phi_4$ filters are $M\times (M-N)$ full-rank matrices satisfying $H_2\phi_2=H_4\phi_4=0,$ and the $\tilde I_M$ and $\hat I_M$ consist of the first $N$ and $2M-N$ columns of the identity matrix $I_M$, respectively.  $a, b, c, d, e$ denote vectors of $N, M-N, 2N-M, N, M-N$ real variables, respectively.  

When the signal power ($P$) is large, it is obvious from the schematic that $(b,c)$ can be solved reliably at Rx$2$ and so can $(e,c)$ at Rx$1$.  Moreover, since $H_{2}\phi_2=0$, Rx$1$ also receives a linear combination of $a$ and $c$, denoted by $L(a,c)$, plus noise, from which $a$ can be solved reliably as $c$ is known.  By the same token, $d$ can also be solved reliably at Rx$2$. Hence we can communicate a total of $2N+M$ variables reliably as $P$ tends to infinity, proving the achievability of the sum DoF.
\end{proof}

\begin{figure}[t]
\centering
\includegraphics[width=9cm]{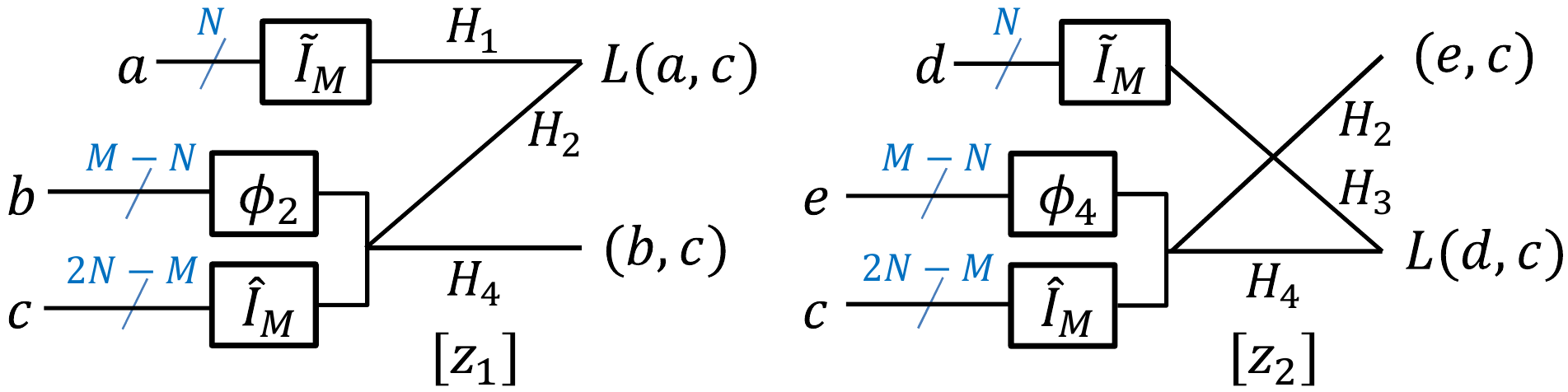}
\caption{Coding scheme for the $\{z_1, z_2\}$ parallel MIMO channel($\tilde I_M, \hat I_M:$ first $N$ and $2M-N$ columns of $I_M$, respectively.  $H_2\psi_2=H_4\phi_4=0.$ )}
\label{fig:z1z2_code}
\end{figure}

\begin{figure}[b]
\centering
\includegraphics[width=9cm]{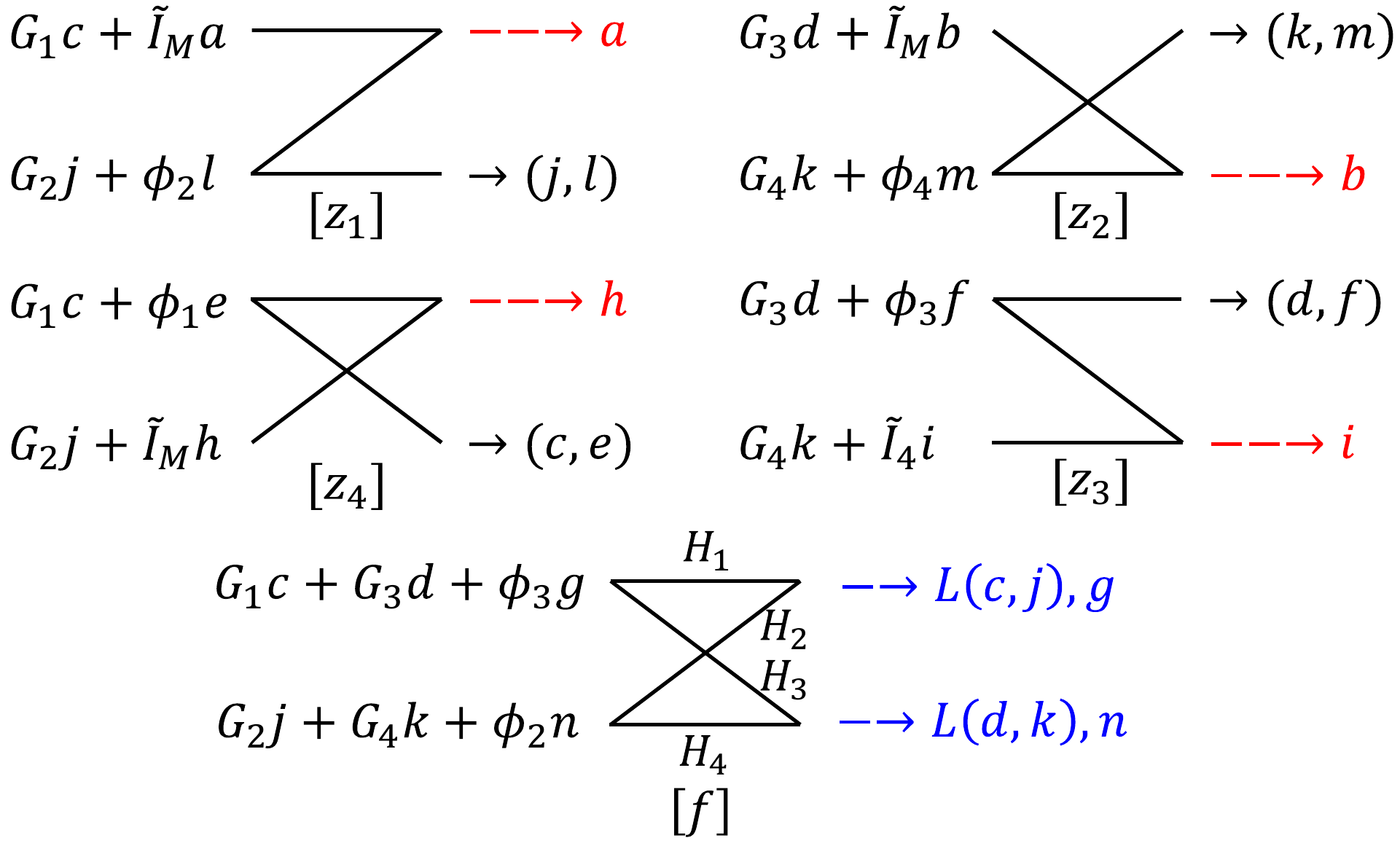}
\caption{Coding scheme for the $\{z, f\}$ parallel MIMO channel}
\label{fig:zf_code}
\end{figure}

\begin{table}[b]
\renewcommand{\arraystretch}{1.5}
\centering
\caption{Length of vectors in the $\{z,f\}$ coding scheme}
\label{tab:zf_vectors}
\vspace{-2ex}
\begin{tabular}{|c|c|c|c|}
	\hline
	  \bfseries vector length & $N$ & $2N-M$ & $M-N$ \\ 
	\hline 
	& & & \\ [-3.5ex]
	\hline
	\bfseries Tx1 vectors & $a,b$ & $c,d$ & $e,f,g$ \\
	\hline
	\bfseries Tx2 vectors & $h,i$ & $j,k$ & $l,m,n$ \\
	\hline
\end{tabular}
\end{table}

Since the $\{z_3, z_4\}$ parallel channel is identical to the $\{z_1, z_2\}$ channel after a relabeling, it obviously has the same DoF.  Hence $2(2N+M)+\frac{4}{3}M$ sum DoF is achievable on the parallel channel comprising the $z_1, z_2, z_3, z_4, f$ topologies, when $\frac{1}{2} <\frac{N}{M} \le 1$.  Interestingly, we can do better.

\begin{lemma} \label{lemma2}
For the $\{z, f\}$ parallel MIMO channel consisting of the $z_1, z_2, z_3, z_4$ and $f$ topologies, $6N+2M$ sum DoF is achievable (a.s.), when $\frac{2}{3} <\frac{N}{M} \le 1$.
\end{lemma}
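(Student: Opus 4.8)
The plan is to treat the five topologies $z_1,z_2,z_3,z_4,f$ as a length-five space-time parallel channel and to transmit exactly $6N+2M$ independent streams over it, precisely the streams itemized in Table~\ref{tab:zf_vectors}: each transmitter sends $3N+M$ streams, namely two blocks of length $N$, two of length $2N-M$, and three of length $M-N$. If all $6N+2M$ streams can be recovered at their intended receivers as $P\to\infty$, the claimed sum DoF follows, since the total stream count is $2\bigl(2N+2(2N-M)+3(M-N)\bigr)=6N+2M$. So the whole lemma reduces to exhibiting a linear precoding/decoding scheme over the five topologies under which every receiver sees its desired streams in a full-rank fashion, separated from the interference.

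First I would fix the roles of the five topologies: $f$ is the full topology with all four links on (the ordinary MIMO X channel), while $z_1,\dots,z_4$ are the four topologies in which exactly one link is switched off. Dropping a link is what the scheme exploits: as in Lemma~\ref{lemma1}, the length-$(M-N)$ blocks are carried on interference-nulling beamformers, i.e.\ full-rank $M\times(M-N)$ filters lying in the null space of the appropriate cross channel (the $\phi$ filters with $H_2\phi_2=H_4\phi_4=0$), so that these streams reach their intended receiver while contributing no interference at the other. The length-$N$ and length-$(2N-M)$ blocks are the ones that require genuine coding across topologies; following the transfer-function view (Figure~\ref{fig:zf_code}), I would choose their precoders so that, when the received vectors are stacked over the five topologies, the interference produced by the unintended transmitter \emph{aligns} into as small a subspace as possible at each receiver.

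The decodability argument is then dimension counting plus genericity. At a given receiver, stack the received signals across the five topologies into one $5N$-dimensional observation. The design of the precoders must place the $3N+M$ desired streams in a $(3N+M)$-dimensional subspace that is linearly independent of the image of the aligned interference, which can occupy at most $5N-(3N+M)=2N-M$ dimensions. Because the channel matrices are drawn from a continuous distribution and are otherwise unstructured, the relevant stacked matrices are full rank almost surely, so the desired block is solved by inverting a full-rank linear system, yielding $6N+2M$ sum DoF a.s. The hypothesis $\tfrac{2}{3}<\tfrac{N}{M}\le 1$ enters exactly here: nonnegativity of the block length $2N-M$ already requires $\tfrac{N}{M}>\tfrac12$, but the stronger bound $\tfrac{N}{M}>\tfrac23$ is the feasibility threshold for squeezing all interference into the available $2N-M$ dimensions. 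Equivalently, it is precisely the point at which $6N+2M$ overtakes the separate scheme's $2(2N+M)+\tfrac43 M$, and at which the $f$-topology X-channel load $\tfrac43 M$ first fits within the $2N$ receive dimensions (since $\tfrac43 M\le 2N\iff \tfrac{N}{M}\ge\tfrac23$).

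The hard part is the joint precoder construction: finding a single set of precoders for which interference nulling and cross-topology alignment hold \emph{simultaneously} at both receivers, respecting the exact block-length bookkeeping of Table~\ref{tab:zf_vectors}, and then certifying that the resulting structured $5N\times(3N+M)$ observation matrices are generically full rank. The transfer-function description of the parallel channel is the tool I would lean on to make this systematic rather than a trial-and-error search over coding possibilities, reducing the verification to checking that a handful of explicit submatrices are nonsingular for generic channel realizations.
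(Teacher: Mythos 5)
Your architecture coincides with the paper's: code jointly over the five topologies, put the length-$(M-N)$ blocks on null-space beamformers, align the length-$N$ and length-$(2N-M)$ blocks across topologies, and finish with a rank argument; your accounting ($3N+M$ desired streams against an interference budget of $5N-(3N+M)=2N-M$ per stacked receiver) is also exactly right. But there is a genuine gap, and you name it yourself: you never exhibit precoders for which nulling and cross-topology alignment hold \emph{simultaneously} at both receivers, nor verify that the resulting system is solvable. That existence claim is the entire content of the lemma, and it cannot be discharged by genericity: once the precoders are chosen as functions of the channel (null spaces, pseudo-inverses), the stacked effective matrix is highly structured --- identity blocks, repeated aligned blocks, zeros --- so ``channel matrices are generic, hence the stacked matrix is full rank a.s.'' is not a valid step; linear independence must be checked for the specific construction. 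Note also that your own count shows there is no slack: $(3N+M)+(2N-M)=5N$ exactly, so the cross interference must not merely lie in a $(2N-M)$-dimensional subspace, it must appear as the \emph{same} linear combination in every topology where it shows up (it appears both in $f$ and in the $z_i$ topologies). The paper achieves precisely this: $G_i$ is taken as the first $2N-M$ columns of $H_i^T(H_iH_i^T)^{-1}$, so that $H_1G_1=H_2G_2$ and $H_3G_3=H_4G_4$, $\phi_i$ spans the null space of $H_i$, the fourteen blocks of Table~\ref{tab:zf_vectors} are assigned to specific topologies as in Figure~\ref{fig:zf_code}, and decodability is verified by a three-step successive cancellation: (i) the single-incoming-link receiver of each $z_i$ decodes its $N$ dimensions; (ii) each receiver of $f$ cancels blocks already known to it, sees one block nulled and one pair aligned, and decodes the remaining $(M-N)+(2N-M)=N$ dimensions, including the aligned combination $L(c,j)$; (iii) the other receiver of each $z_i$ cancels the already-decoded aligned combination and solves its length-$N$ block. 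Without this (or an equivalent) explicit construction and verification, your criterion is a statement of what a scheme must satisfy, not a proof that one exists.

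A secondary error: your explanation of where $\tfrac{2}{3}<\tfrac{N}{M}$ enters is off. The construction and the tight dimension count above are feasible whenever $\tfrac{1}{2}<\tfrac{N}{M}\le 1$ (all block lengths $N$, $2N-M$, $M-N$ are nonnegative there, and every decoding step goes through); $\tfrac{N}{M}>\tfrac{2}{3}$ is not a feasibility threshold for the alignment. It is merely the regime in which $6N+2M$ strictly exceeds what separate coding already gives, namely $2(2N+M)$ from the two pairs of Lemma~\ref{lemma1} plus $\min\bigl(2N,\tfrac{4}{3}M\bigr)$ on $f$ --- i.e., the regime where this lemma is actually needed. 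Presenting $\tfrac{2}{3}$ as the point where squeezing the interference into $2N-M$ dimensions first becomes possible is a wrong justification, though it is harmless to the statement since the hypothesis is assumed.
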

\begin{proof}
The key idea is to incorporate both interference alignment \cite{JS08} and interference nulling \cite{YW18a},\cite{MMK08} into coding across topologies.  As in the proof of Lemma \ref{lemma1}, let $\tilde I_M$ consist of the first $N$ columns of $I_M$, and let $\phi_i$ be an $M\times (M-N)$ full-rank matrix satisfying $H_i\phi_i=0, i=1,2,3,4$.  In addition, let us define $G_i$ to comprise the first $(2N-M)$ columns of the pseudo inverse of $H_i$, namely $H_i^T(H_iH_i^T)^{-1}$, and consider the coding scheme depicted in Figure \ref{fig:zf_code}, where $a,b,c,...,n$ are vectors of real variables with their length specified in Table \ref{tab:zf_vectors}.

Assuming large signal power ($P$), we decode with successive interference cancellation in three steps:  

Step 1: Decode the variables at the receiver of each $z_i$-topology with only one incoming link (indicated by an $\rightarrow$ in the figure).  At Rx$2$ of the $z_1$-topology, for example, $(j,l)$ can clearly be decoded reliably.  The other $z_i$-topologies can be treated similarly.

Step 2: Decode the variables at both receivers of the $f$-topology.  Consider Rx$1$ first.  Since $(d,k)$ have been decoded in Step 1, we can remove them.  Note also that vector $n$ is gone due to interference nulling, i.e. $H_2\phi_2=0$.  Moreover, vectors $c$ and $j$ are aligned because $H_1G_1=H_2G_2$.  As a result, we can reliably decode $g$ and $L(c,j)$, a linear combination of $c$ and $j$, as illustrated in Figure \ref{fig:zf_code}.  Rx$2$ is decoded similarly.  

Step 3: Finally, we decode the remaining receiver of each $z_i$-topology.  Take Rx$1$ of the $z_1$-topology for instance.  Vector $l$ is nulled, while $c$ and $j$ are aligned and $L(c,j)$ has been decoded in Step 2.  Canceling it, vector $a$ can hence be decoded reliably.  The same strategy applies to the other $z_i$-topologies.  

Therefore, vectors $a,b,c,...,n$ can all be reliably decoded, leading to an achievable sum DoF of $\setlength{\medmuskip}{2mu} 4N+4(2N-M)+6(M-N)=6N+2M.$
\end{proof}

\begin{remark}
With simple converse arguments, one can show that the DoF achieved in Lemma \ref{lemma1} and \ref{lemma2} are in fact optimal.
\end{remark}

\subsection{Achievability proof of Theorem \ref{thm:DoF}}

To prove Theorem \ref{thm:DoF}, we distinguish three cases:  

When $N\le \frac{1}{2}M$, it is unnecessary to code across topology.  With DoF-optimal code for each topology, it is easy to verify that the following sum DoF is achievable (a.s.):
\begin{equation} \label{eq:4_1}
N(4pq^3+10p^2q^2+8p^3q+2p^4)=2Np(1+q).
\end{equation}

When $\frac{1}{2}M < N\le \frac{2}{3}M$, we use codes across $\{z_1, z_2\}$ topologies and $\{z_3, z_4\}$ topologies, together with per-topology DoF-optimal codes for the remaining topologies.  With Lemma \ref{lemma1}, it follows that we can achieve (a.s.) a sum DoF of
\begin{align} \begin{split} \label{eq:4_2}
&pq^34N+p^2q^2(6N+2M)+p^3q(4N+2M)+p^42N\\
=&2N(p^2+2pq^2)+2Mp^2q.
\end{split} \end{align}

Lastly, let us consider the case where $\frac{2}{3}M < N\le M$ and $p\le \frac{1}{2}$. For a long period of ($n$) channel uses, the $f$-topology occurs approximately $np^4$ times, while each  $z_i$-topology occurs approximately $np^3q$ times.  We first code across the $\{z_1, z_2, z_3, z_4, f\}$ topologies and totally consume the $f$-topologies.  Since $p^4 \le p^3q$, we then use $\{z_1, z_2\}$- and $\{z_3, z_4\}$-topological codes on the remaining $z_i$-topologies.  For the other topologies, simply employ a DoF-optimal code on each topology.  Thus, by Lemma \ref{lemma1} and \ref{lemma2}, we can achieve sum DoF of $\setlength{\medmuskip}{2mu} pq^34N+p^2q^2(6N+2M)+p^4(6N+2M)+(p^3q-p^4)(4N+2M)$, or equivalently
\begin{equation}  \label{eq:4_3}
2N(p^2+2pq^2)+2Mp^2q,
\end{equation}
which, interestingly, coincides with (\ref{eq:4_2}).  The achievability of Theorem \ref{thm:DoF} is hence established by (\ref{eq:4_1})--(\ref{eq:4_3}).

\subsection{Proof of $\eta^\mathrm{lb}$ of Theorem \ref{thm:UBLB}}

When $\frac{2}{3}M < N\le M$ and $p> \frac{1}{2}$, the priority is again to use the $\{z, f\}$-topological code as much as possible.  For the remaining topologies, DoF-optimal code is employed on each of them.  Noting that $p^4>p^3q$, we conclude that the following sum DoF is achievable (a.s.): $\setlength{\medmuskip}{2mu} pq^34N+p^2q^2(6N+2M)+p^3q(6N+2M)+(p^4-p^3q)\frac{4}{3}M$, or equivalently
\begin{equation} \label{eq:4_4}
pq^34N+p^2q(6N+2M)+(p^4-p^3q)\frac{4}{3}M,
\end{equation}
proving the $\eta^\mathrm{lb}$ of Theorem \ref{thm:UBLB}.

\section{Transfer Function View} \label{Tran}


For simple codes across a couple of topologies, the DoF-optimal codes are not hard to find by inspecting the schematic, e.g. Figure \ref{fig:z1z2_code}, and trials and errors.  This approach, however, quickly becomes impractical as the topologies and antennas increase.  Take the $5$-topology parallel channel shown in Figure \ref{fig:zf_code}, for example.  This parallel channel has $10$ transmitters and $10$ receivers, and with $M=4$ and $N=3$, its sum DoF is $26$.  To find a DoF-optimal code from the schematic by trials-and-errors, we need to decide how to distribute these $26$ variables among the $10$ transmitters.  Some variables may be used multiple times and combined with other variables.  For each variable, we also have the flexibility of choosing a beamforming vector.  There are simply too many possibilities---assuming even just $4$ coding choices on each transmitter, this would amount to $4^{10}$ possibilities!  Not to mention the decoding schemes across the receivers.  We need a more systematic method.

One such approach can be obtained from the transfer function view of the \emph{entire} $\{z,f\}$-parallel channel, namely
\begin{equation*}
\setlength{\dashlinedash}{.4pt}	\setlength{\dashlinegap}{.8pt}
\setlength{\medmuskip}{0mu} 	\setlength{\thickmuskip}{0mu}
\left[ \arraycolsep=1.2pt \begin{array}{c}
Y_{1,z_1}\\
Y_{1,z_2}\\
Y_{1,z_3}\\
Y_{1,z_4}\\
Y_{1,f} \vspace{1pt} \\
\hdashline  \\ [-1.0em]
Y_{2,z_1}\\
Y_{2,z_2}\\
Y_{2,z_3}\\
Y_{2,z_4}\\
Y_{2,f}\\
\end{array}\right]
=\left[ \arraycolsep=1.2pt\def\arraystretch{1.0}
\begin{array}{ccccc:ccccc}
H_1 & 0 & 0 & 0 & 0   & H_2 & 0 & 0 & 0 & 0\\
    0 & 0 & 0 & 0 & 0   & 0 & H_2 & 0 & 0 & 0\\
0 & 0 & H_1 & 0 & 0   & 0 & 0     & 0 & 0 & 0\\
0 & 0 & 0 & H_1 & 0   & 0 & 0 & 0 & H_2 & 0\\
0 & 0 & 0 & 0 & H_1   & 0 & 0 & 0 & 0 & H_2 \vspace{1pt} \\ 
\hdashline &&&&&&& \\ [-1.0em]
    0 & 0 & 0 & 0 & 0   & H_4 & 0 & 0 & 0 & 0\\
0 & H_3 & 0 & 0 & 0   & 0 & H_4 & 0 & 0 & 0\\
0 & 0 & H_3 & 0 & 0   & 0 & 0 & H_4 & 0 & 0\\
0 & 0 & 0 & H_3 & 0   & 0 & 0 & 0     & 0 & 0\\
0 & 0 & 0 & 0 & H_3   & 0 & 0 & 0 & 0 & H_4\\
\end{array} \right]
\left[ \arraycolsep=1.2pt \begin{array}{c}
X_{1,z_1}\\
X_{1,z_2}\\
X_{1,z_3}\\
X_{1,z_4}\\
X_{1,f} \vspace{1pt} \\
\hdashline \\ [-1.0em]
X_{2,z_1}\\
X_{2,z_2}\\
X_{2,z_3}\\
X_{2,z_4}\\
X_{2,f}\\
\end{array} \right],
\end{equation*}
where the noise is ignored, and $X_{i,t}, Y_{j,t}$ denote the transmitted and received vector at Tx$i$ and Rx$j$ of the $t$ topology, respectively.  More compactly, we write
\begin{equation} \label{eq:5_2}
\left[ \arraycolsep=1.2pt \begin{array}{c}
\overline{Y_1}\\
\overline{Y_2}\\
\end{array}\right]
=\left[ \arraycolsep=1.2pt\def\arraystretch{1.0}
\begin{array}{cc}
\overline{H_1} & \overline{H_2}\\
\overline{H_3} & \overline{H_3}\\
\end{array} \right]
\left[ \arraycolsep=1.2pt \begin{array}{c}
\overline{X_1}\\
\overline{X_2}\\
\end{array} \right],
\end{equation}
where $\overline{X_i}, \overline{Y_i}$ refer to the \emph{super} vector across $5$ topologies at Tx$i$ and Rx$i$, respectively, and $\overline{H_i}$ denotes the corresponding $5N\times 5M$ \emph{super} channel matrix.  Our goal is to design a precoding matrix $\mathbf P=\mathrm{diag}(A, B)$ so that we can solve the desired number of variables from the transformed system of linear equations:
\begin{equation} \label{eq:5_3}
\left[ \arraycolsep=1.2pt \begin{array}{c}
\overline{Y_1}\\
\overline{Y_2}\\
\end{array}\right]
=\left(
\left[ \arraycolsep=1.2pt\def\arraystretch{1.0}
\begin{array}{cc}
\overline{H_1} & \overline{H_2}\\
\overline{H_3} & \overline{H_3}\\
\end{array} \right]
\left[ \arraycolsep=1.2pt\def\arraystretch{1.0}
\begin{array}{cc}
A & 0\\
0 & B\\
\end{array} \right]
\right)
\left[ \arraycolsep=1.2pt \begin{array}{c}
U_1\\
U_2\\
\end{array} \right]
\triangleq \mathbf H_\mathrm{eff}\left[ \arraycolsep=1.2pt \begin{array}{c}
U_1\\
U_2\\
\end{array} \right].
\end{equation}
Note that $\overline X_1=AU_1, \overline X_2=BU_2,$ where $U_i$ is the effective super input vector at Tx$i$, across topologies. For concreteness, we illustrate the approach with $M=4, N=3$ again.  The extension to general $M$ and $N$ is straightforward.

\subsection{Block-level interference alignment: $24$ DoF achievable}

A moment of reflection on (\ref{eq:5_3}) and the block structure of the $\overline H_i$ super channel matrix suggests the following simple precoding scheme via \emph{block-level} interference alignment:
\begin{equation} \label{eq:5_4}
A = \left[ \arraycolsep=3pt\def\arraystretch{1.0}
\begin{array}{cccc}
0             	& H_1^\dagger 	& \tilde I_4 	& 0 \\
\tilde I_4	& 0 				& 0 			& H_3^\dagger \\
0             	& 0 				& 0 			& H_3^\dagger \\
0             	& H_1^\dagger	& 0 			& 0 \\
0             	& H_1^\dagger	& 0 			& H_3^\dagger  \\
\end{array} \right], \quad
B = \left[ \arraycolsep=3pt\def\arraystretch{1.0}
\begin{array}{cccc}
0             	& H_2^\dagger 	& 0		 	& 0 \\
0			& 0 				& 0 			& H_4^\dagger \\
\tilde I_4  	& 0 				& 0 			& H_4^\dagger \\
0             	& H_2^\dagger	& \tilde I_4	& 0 \\
0             	& H_2^\dagger	& 0 			& H_4^\dagger  \\
\end{array} \right],
\end{equation}
where $\tilde I_4$ consists of the first $3$ columns of identity matrix $I_4$ and $H_i^\dagger$ is the pseudo inverse of $H_i$.  This leads to the following $\mathbf H_\mathrm{eff}$:
\begin{equation} \label{eq:5_5}
\setlength{\dashlinedash}{.4pt}	\setlength{\dashlinegap}{.8pt}
\left[ \arraycolsep=2pt\def\arraystretch{0.9}
\begin{array}{cccc:cccc}
0             	& I_3 	& \tilde H_1 	& 0					& 0	& I_3	& 0			& 0	\\
0			& 0 	& 0 			& 0					& 0	& 0	& 0			& H_2H_4^\dagger	\\
0             	& 0 	& 0 			& H_1H_3^\dagger	& 0	& 0	& 0			& 0	\\
0             	& I_3	& 0 			& 0					& 0	& I_3	& \tilde H_2	& 0	\\
0             	& I_3	& 0 			& H_1H_3^\dagger	& 0	& I_3	& 0			& H_2H_4^\dagger \vspace{1pt}	\\
\hdashline &&&&&&& \\ [-0.9em]
0             	& 0 					& 0	& 0	& 0			& H_4H_2^\dagger	& 0	& 0	\\
\tilde H_3	& 0 					& 0	& I_3	& 0			& 0					& 0	& I_3	\\
0             	& 0 					& 0 	& I_3	& \tilde H_4	& 0					& 0	& I_3	\\
0             	& H_3H_1^\dagger	& 0 	& 0	& 0			& 0					& 0	& 0	\\
0             	& H_3H_1^\dagger	& 0 	& I_3	& 0			& H_4H_2^\dagger	& 0	& I_3	\\
\end{array} \right],
\end{equation}
where $\tilde H_i$ denotes the first $3$ columns of $H_i$.  With this precoding scheme, the 1$^{st}$ and 5$^{th}$ columns are nulled at $\overline Y_1$, while the 2$^{nd}$ and 6$^{th}$ columns are aligned.  In addition, the non-zero columns are linearly independent, so $12$ variables may be solved at $\overline Y_1$.  Similar arguments hold at $\overline Y_2$.  So this scheme can achieve a sum DoF of $24$.

\subsection{Refined interference alignment: $26$ DoF achievable}

A simple refinement of the above scheme leads to even higher DoF. Specifically, zooming into each $H_i$ matrix quickly reveals that it has 1 dimension of null space which we may exploit.  For example, replace each $H_1^\dagger$ and $H_2^\dagger$ in (\ref{eq:5_4}) with $[G_1, \phi_1, \phi_3]$ and $[G_2, \phi_2]$, respectively, where $G_i$ consists of the first $2$ columns of $H_i^\dagger$ and $\phi_i$ is a basis vector of the null space of $H_i$.  Similarly, substitute $[G_3, \phi_3]$ and $[G_4, \phi_4, \phi_2]$ for each $H_3^\dagger$ and $H_4^\dagger$ in (\ref{eq:5_4}), respectively, and the $\mathbf H_\mathrm{eff}$ now becomes:
\begin{equation*} \label{eq:5_6}
\small
\setlength{\dashlinedash}{.4pt}	\setlength{\dashlinegap}{.8pt}
\left[ \arraycolsep=0pt\def\arraystretch{1.1}
\begin{array}{cccc:cccc}
0             	& [\tilde I_3, 0, H_1\phi_3] 	& \tilde H_1 	& 0						& 0	& [\tilde I_3, 0]	& 0			& 0	\\
0			& 0 							& 0 			& 0						& 0	& 0				& 0			& [H_2[G_4, \phi_4],0]	\\
0             	& 0 							& 0 			& H_1[G_3, \phi_3]		& 0	& 0				& 0			& 0	\\
0             	& [\tilde I_3, 0, H_1\phi_3]	& 0 			& 0						& 0	& [\tilde I_3, 0]	& \tilde H_2	& 0	\\
0             	& [\tilde I_3, 0, H_1\phi_3]	& 0 			& H_1[G_3, \phi_3]		& 0	& [\tilde I_3, 0]	& 0			& [H_2[G_4, \phi_4],0]	\vspace{1pt}\\
\hdashline &&&&&&& \\ [-1.0em]
0             	& 0 						& 0	& 0				& 0			& H_4[G_2, \phi_2]	& 0	& 0	\\
\tilde H_3	& 0 						& 0	& [\tilde I_3, 0]	& 0			& 0					& 0	& [\tilde I_3, 0, H_4\phi_2]	\\
0             	& 0 						& 0 	& [\tilde I_3, 0]	& \tilde H_4	& 0					& 0	& [\tilde I_3, 0, H_4\phi_2]	\\
0             	& [H_3[G_1, \phi_1], 0]	& 0 	& 0				& 0			& 0					& 0	& 0	\\
0             	& [H_3[G_1, \phi_1], 0]	& 0 	& [\tilde I_3, 0]	& 0			& H_4[G_2, \phi_2]	& 0	& [\tilde I_3, 0, H_4\phi_2]	\\
\end{array} \right]
\normalsize
\end{equation*}
where $\tilde I_3$ denotes the first $2$ columns of $I_3$.  

Now consider $[G_1, \phi_1, \phi_3]$ first.  The essence is to take away one of the dimensions used by interference alignment, and to save it for interference nulling vectors $\phi_1$ and $\phi_3$.  Since $\phi_1$ vanishes at $\overline Y_1$ and so does $\phi_3$ at $\overline Y_2$, these two vectors occupy only $1$ dimension at either receiver, but they enable us to send one more variable through the network.  The rationale for $[G_4, \phi_4, \phi_2]$ is the same, and the linear independence of the non-zero columns at each receiver is maintained.  

Hence the optimal $26$ DoF is achievable with this scheme.  Moreover, we obtain the code shown in Figure \ref{fig:zf_code} after a slight optimization (of reducing the number of $\phi_i$ filters.)  It is also clear in this view that this code is a space-time code, obtained by interference alignment over space and time (topologies).





\ifdefined\IncludeAppendices

\appendices
\section{DoF upper bound I} \label{App_A}


We prove an upper bound of the sum DoF of the bursty MIMO XC in this appendix, which establishes the converse part of Theorem \ref{thm:DoF} and $\eta_1^\mathrm{ub}$ of Theorem \ref{thm:UBLB}.  To simplify the notations, we define $S\triangleq(S_{11},S_{12}, S_{21}, S_{22})$ and adopt the convention of using $X^n$ to denote a sequence of random variables, $(X_1, X_2, \cdots, X_n)$.
\begin{align} \begin{split} \label{eq:A_1}
&n(R_{11}+R_{12}-\epsilon_n) \\
&\le I(M_{11},M_{12}; Y_1^n,S^n)  \\
&\le I(M_{11},M_{12}; Y_1^n,S^n,M_{21}) \\
&\overset{(a)}{=} I(M_{11},M_{12}; Y_1^n\mid S^n,M_{21}) \\
&= h(Y_1^n\mid S^n, M_{21}) - h(Y_1^n\mid S^n, M_{21},M_{11},M_{12})\\
&\overset{(b)}= h(Y_1^n\mid S^n, M_{21}) - h((S_{12}H_{12}X_2+Z_1)^n\mid S^n, M_{12}),\\
\end{split} \end{align}
where $(a)$ is due to the independence between $(M_{11},M_{12})$ and $(S^n,M_{21})$, $(b)$ follows from the fact that $X_1^n$ becomes deterministic when $S^n, M_{11}, M_{21}$ are given, and $(S_{12}H_{12}X_2+Z_1)^n$ denotes $\{S_{12i}H_{12}X_{2i}+Z_{1i}: i=1,2,\cdots,n\}$.By symmetry, we also have
\begin{align} \begin{split}\label{eq:A_2}
&n(R_{21}+R_{22}-\epsilon_n) \\
&= h(Y_2^n\mid S^n, M_{12}) - h((S_{21}H_{21}X_1+Z_2)^n\mid S^n, M_{21}).\\
\end{split} \end{align}

Let us bound $h(Y_1^n\mid S^n, M_{21})- h((S_{21}H_{21}X_1+Z_2)^n\mid S^n, M_{21})$ first.  Denoting $(S_{21}H_{21}X_1+Z_2)^n$ by $\Omega$, we note that
\begin{align} \begin{split}\label{eq:A_3}
&h(Y_1^n\mid S^n, M_{21})\\
&= I(Y_1^n; X_1^n,X_2^n\mid S^n, M_{21})+h(Z_1^n)\\
&\le I(Y_1^n,\Omega; X_1^n,X_2^n\mid S^n, M_{21})+h(Z_1^n) \\
&= h(Y_1^n, \Omega\mid  S^n, M_{21}) - h(Z_1^n, Z_2^n)+h(Z_1^n)\\
&=h(\Omega\mid  S^n, M_{21})+h(Y_1^n \mid  S^n, M_{21},\Omega)-h(Z_2^n).
\end{split} \end{align}
So it follows that
\begin{align} \begin{split} \label{eq:A_4}
&h(Y_1^n\mid S^n, M_{21})- h((S_{21}H_{21}X_1+Z_2)^n\mid S^n, M_{21})\\
&\le h(Y_1^n \mid  S^n, M_{21},(S_{21}H_{21}X_1+Z_2)^n)\\
&\le h(Y_1^n \mid  S^n,(S_{21}H_{21}X_1+Z_2)^n)\\
&\le \sum_{i=1}^{n} h(Y_{1i}\mid \ S_i, (S_{21}H_{21}X_1+Z_2)_i) \\
&\overset{(a)}\le \sum_{i=1}^{n} h(Y_{1i}\mid \ S_{11i}, S_{12i}, S_{21i}, (S_{21}H_{21}X_1+Z_2)_i),
\end{split} \end{align}
where $(a)$ holds because $S_i=(S_{11i}, S_{12i}, S_{21i}, S_{22i}).$  We then bound $h(Y_{1i}\mid \ S_{11i}, S_{12i}, S_{21i}, (S_{21}H_{21}X_1+Z_2)_i)$ by letting $(S_{11i},S_{12i}, S_{21i})$ assume the values of $(1,1,1), (1,0,1), (0,1,1), (0,0,1), (1,1,0), (1,0,0), (0,1,0),$ and $(0,0,0)$.  This leads to an upper bound of
\begin{equation} \begin{aligned}\label{eq:A_5}
p^3	&h(H_{11}X_{1i}+H_{12}X_{2i}+Z_{1i}&&\mid H_{21}X_{1i}+Z_{2i})\\
+p^2q	&h(H_{11}X_{1i}+Z_{1i}&&\mid H_{21}X_{1i}+Z_{2i})\\
+p^2q	&h(H_{12}X_{2i}+Z_{1i}&&\mid H_{21}X_{1i}+Z_{2i})\\
+pq^2	&h(Z_{1i}&&\mid H_{21}X_{1i}+Z_{2i})\\
+p^2q	&h(H_{11}X_{1i}+H_{12}X_{2i}+Z_{1i}&&\mid Z_{2i})\\
+pq^2	&h(H_{11}X_{1i}+Z_{1i}&&\mid Z_{2i})\\
+pq^2	&h(H_{12}X_{2i}+Z_{1i}&&\mid Z_{2i})\\
+q^3	&h(Z_{1i}&&\mid Z_{2i}).\\
\end{aligned} \end{equation}

Now we distinguish four cases based on $M$ and $N$. In each case, with well-known facts such as Fact 1 of \cite{YW18a}, it is straightforward to verify that if we divide (\ref{eq:A_5}) by $\frac{1}{2}\log P$ and let $P\to \infty$, it is upper bounded by
\begin{enumerate}
\item $N\le M/2:$
\begin{align} \begin{split} \label{eq:A_6}
&N(p^3+p^2q+p^2q+0+p^2q+pq^2+pq^2+0)\\
=&Np(1+q),
\end{split} \end{align}

\item $M/2<N\le M:$
\begin{align} \begin{split} \label{eq:A_7}
&p^3N+p^2q(M-N)+p^2qN+0+p^2qN+pq^2N\\
&+pq^2N+0\\
=&N(p^2+2pq^2)+Mp^2q,
\end{split} \end{align}

\item $M\le N/2:$
\begin{align} \begin{split} \label{eq:A_8}
&M(p^3+0+p^2q+0+2p^2q+pq^2+pq^2+0)\\
=&Mp(1+q),
\end{split} \end{align}

\item$N/2 < M \le N:$
\begin{align} \begin{split} \label{eq:A_9}
&p^3M+0+p^2qM+0+p^2qN+pq^2M+pq^2M+0\\
=&M(p^2+2pq^2)+Np^2q.
\end{split} \end{align}

\end{enumerate}

It is easily checked that (\ref{eq:A_6})--(\ref{eq:A_9}) coincide with Theorem \ref{thm:DoF} and $\eta_1^\mathrm{ub}$ of Theorem \ref{thm:UBLB}, except for a factor $2$. Due to symmetry $h(Y_2^n\mid S^n, M_{12})- h((S_{12}H_{12}X_2+Z_1)^n\mid S^n, M_{12})$ can be bounded in the same way, so the proof is completed by combining (\ref{eq:A_1})--(\ref{eq:A_2}) and (\ref{eq:A_4})--(\ref{eq:A_9}), and letting $n\to \infty.$

\section{DoF upper bound II} \label{App_B}


\begin{figure}[t]
\centering
\includegraphics[width=7cm]{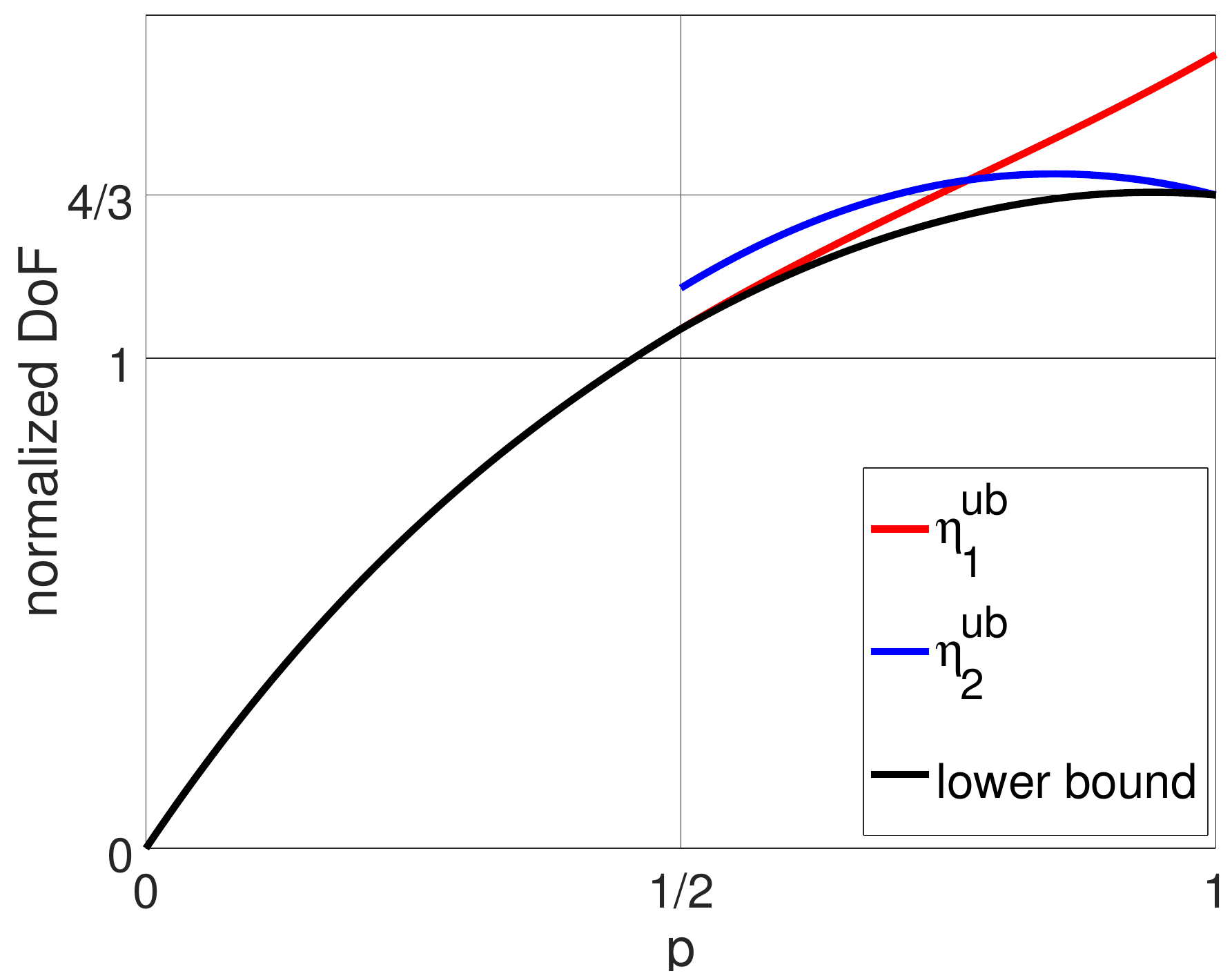}
\caption{$\eta_2^\mathrm{ub}$ complements $\eta_1^{ub}$ when $p$ is large. ($r=0.81$) }
\label{fig:DoF_UB}
\end{figure}

$\eta_2^\mathrm{ub}$ of Theorem \ref{thm:UBLB} is proved by bounding each sum of three rates:
\begin{align} \begin{split} \label{eq:B_1}
&n(R_{21}+R_{22}-\epsilon_{1,n}) \\
&\le I(M_{21},M_{22}; Y_2^n,S^n)  \\
&\le I(M_{21},M_{22}; Y_2^n,S^n,M_{11}) \\
&\overset{(a)}{=} I(M_{21},M_{22}; Y_2^n\mid S^n,M_{11}) \\
&= h(Y_2^n\mid S^n, M_{11}) - h(Y_2^n\mid S^n, M_{11},M_{21},M_{22})\\
&\overset{(b)}= h(Y_2^n\mid S^n, M_{11}) - h((S_{22}H_{22}X_2+Z_2)^n\mid S^n, M_{22}),
\end{split} \end{align}
where the reason for $(a)$ and $(b)$ parallels the one in (\ref{eq:A_1}).  In addition, $R_{12}$ is bounded by
\begin{align} \begin{split} \label{eq:B_2}
&n(R_{12}-\epsilon_{2,n}) \\
&\le I(M_{12}; Y_1^n,S^n)  \\
&\le I(M_{12}; Y_1^n, S^n, Y_2^n, X_1^n, M_{22})  \\
& = I(M_{12}; Y_1^n, Y_2^n\mid S^n, X_1^n, M_{22})  \\
& = h(Y_1^n, Y_2^n\mid S^n, X_1^n, M_{22}) - h(Z_1^n, Z_2^n) \\
& \le h((S_{22}H_{22}X_2+Z_2)^n\mid S^n, M_{22}) \\
& \quad + h((S_{22}H_{22}X_2+Z_2)^n\mid S^n, M_{22}, (S_{12}H_{12}X_1+Z_1)^n).
\end{split} \end{align}

Combining (\ref{eq:B_1}) and (\ref{eq:B_2}), we have
\begin{align} \begin{split} \label{eq:B_3}
&n(R_{12}+R_{21}+R_{22}-\epsilon_{1,n}-\epsilon_{2,n}) \\
& \le  h(Y_2^n\mid S^n) \\
& \quad +  h((S_{22}H_{22}X_2+Z_2)^n\mid S^n, (S_{12}H_{12}X_1+Z_1)^n) \\
& \le \sum_{i=1}^n h(Y_{2i}\mid S_i) \\
& \quad +  h((S_{12}H_{12}X_1+Z_1)_i\mid S_i, (S_{22}H_{22}X_2+Z_2)_i),
\end{split} \end{align}
where
\begin{align} \begin{split} \label{eq:B_4}
& h(Y_{2i}\mid S_i)\\
& \le  p^2h(H_{21}X_{1i}+H_{22}X_{2i}+Z_{2i})+pqh(H_{21}X_{1i}+Z_{2i}) \\
& +pqh(H_{22}X_{2i}+Z_{2i})
\end{split} \end{align}
and 
\begin{align} \begin{split} \label{eq:B_5}
& h((S_{12}H_{12}X_1+Z_1)_i\mid S_i, (S_{22}H_{22}X_2+Z_2)_i) \\
& \le  p^2h(H_{12}X_{1i}+Z_{1i}\mid H_{22}X_{2i}+Z_{2i}) \\
& + pqh(Z_{1i}\mid H_{22}X_{2i}+Z_{2i}) +pqh(H_{12}X_{1i}+Z_{1i}\mid Z_{2i}).
\end{split} \end{align}

Hence, with the same techniques employed in Appendix \ref{App_A}, as $n\to \infty$ and $P\to \infty$, $(R_{12}+R_{21}+R_{22})/(\frac{1}{2}\log P)$ can be upper bounded by
\begin{enumerate}
\item $N\le M:$
\begin{equation} \label{eq:B_6}
p^2N+pqN+pqN+p^2(M-N)+0+pqN=p^2M+3pqN
\end{equation}

\item $N/2\le M\le N:$
\begin{equation} \label{eq:B_7}
p^2N+pqM+pqM+0+0+pqM=p^2N+3pqM,
\end{equation}
\end{enumerate}
which agrees with $\eta_2^\mathrm{ub}$, except for a factor of $\frac{4}{3}$.  The proof is hence completed after a straightforward verification that the same bound applies to $R_{11}+R_{21}+R_{22}, R_{11}+R_{12}+R_{22}$ and $R_{11}+R_{12}+R_{21}$ as well.

Figure \ref{fig:DoF_UB} illustrates how $\eta_2^\mathrm{ub}$ complements $\eta_1^\mathrm{ub}$ when $p$ is large.  (The lower bound is tight when $p\le 0.5.$)

\section{Coding schemes for $M\le N$ and Proof of Theorems \ref{thm:DoF} and \ref{thm:UBLB}} \label{App_C}


\begin{figure}[b]
\centering
\includegraphics[width=9cm]{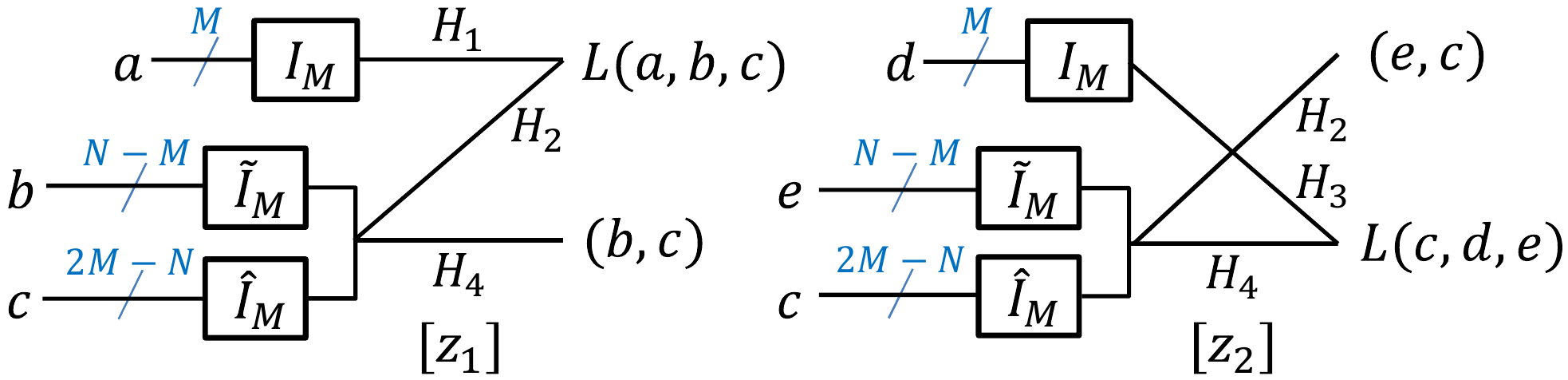}
\caption{Coding scheme for the $\{z_1, z_2\}$ parallel MIMO channel when $M \le N$.  ($\tilde I_M, \hat I_M:$ first $N-M$ and $2M-N$ columns of $I_M$, respectively.)}
\label{fig:z1z2_code_dual}
\end{figure}

The coding schemes for $M\le N$ are similar to those for $M\ge N$, and are included here for completeness.  They are in fact simpler.   Instead of exploiting the null spaces of the channel matrices with interference nulling beamforming (INBF) on the transmitters when $M \ge N$, we just make use of the extra received signal dimensions at the receivers when $M \le N$, eliminating the need for INBF.  We demonstrate this in the proof of the following two dual lemmas.

\begin{lemma} \label{lemma3}
For the $\{z_1, z_2\}$ parallel MIMO channel consisting of the $z_1$ and $z_2$ topologies, $2M+N$ sum DoF is achievable (a.s.), when $\frac{1}{2} < \frac{M}{N} \le 1.$
\end{lemma}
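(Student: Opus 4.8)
The plan is to dualize the scheme of Lemma \ref{lemma1}. Since now $M\le N$, every $N\times M$ channel matrix $H_i$ has full column rank $M$ almost surely, so its column space is an $M$-dimensional subspace of the $N$-dimensional receive space, leaving an $(N-M)$-dimensional orthogonal complement at each receiver. This spare receive room plays exactly the role that the $(M-N)$-dimensional transmit null space played in Lemma \ref{lemma1}: instead of placing interference-nulling beamformers $\phi_i$ at the transmitters, I would project the received signal onto the complement of an interferer's column space to zero-force it at the receiver, as indicated in Figure \ref{fig:z1z2_code_dual}. Concretely, each transmitter would send its streams directly through the column-selection precoders $\tilde I_M$ and $\hat I_M$ (the first $N-M$ and $2M-N$ columns of $I_M$), with real-variable vectors of lengths $M,\,N-M,\,2M-N,\,M,\,N-M$, so the total stream count is $M+(N-M)+(2M-N)+M+(N-M)=2M+N$, matching the claim.

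For decoding I would mirror the argument of Lemma \ref{lemma1} with the roles of transmit-nulling and receive-projection interchanged. First, at the receiver fed by a single active link in a given topology, the two streams arriving there (the analogues of $(b,c)$ at Rx$2$ and $(e,c)$ at Rx$1$) are immediately resolvable once $P$ is large, because $N\ge M$ supplies enough receive dimensions to separate them. Second, using the $(N-M)$ extra receive dimensions I would zero-force the cross-interference at the other receiver, so that it sees only a linear combination $L(\cdot,\cdot)$ of a desired vector and the already-decoded common stream $c$ of length $2M-N$; subtracting the known $c$ then yields the remaining desired vector, exactly dual to recovering $a$ from $L(a,c)$ in Lemma \ref{lemma1}. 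Carrying this out symmetrically at both receivers would deliver all $2M+N$ variables reliably as $P\to\infty$. (Alternatively, one could obtain the same result by a reciprocity argument from Lemma \ref{lemma1}, transposing each channel so that transmit beamformers become receive combiners; but the direct construction above matches the figure and is cleaner to state.)

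The hypothesis $\tfrac{1}{2}<\tfrac{M}{N}\le 1$ is what keeps the dimension bookkeeping consistent: it guarantees $2M-N>0$, so the common stream $c$ is nonempty, and $N-M<M$, so the extra receive dimensions used for interference rejection never exceed the signal dimension, keeping the stream budget balanced across the two topologies. Genericity---that all the relevant stacked matrices are full rank almost surely---then follows from the channel entries being drawn from a continuous distribution, exactly as invoked in \cite{JS08} and in Lemma \ref{lemma2}.

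The main obstacle I anticipate is verifying that, after the receive-side zero-forcing, the map from the desired and common streams into the surviving receive subspace is full rank almost surely. One must check both the dimension count---that projecting out the interferer leaves enough receive dimensions to resolve the desired streams together with $c$---and that the common stream $c$ does not collapse onto the desired directions at either receiver. These are generic linear-independence conditions, so they hold a.s., but assembling the explicit stacked matrix at each receiver and confirming its rank is the step requiring the most care; everything else is the straightforward dual of the $M\ge N$ bookkeeping.
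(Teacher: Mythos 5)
Your proposal is correct and is essentially the paper's own scheme: the same variable allocation $(a,b,c,d,e)$ of lengths $(M,\,N-M,\,2M-N,\,M,\,N-M)$ with the common stream $c$ shared across the $z_1$ and $z_2$ topologies, and the same first decoding step at the single-incoming-link receivers. The only (cosmetic) difference is the final step: you zero-force the $b$-stream at Rx$1$ of $z_1$ (resp.\ $e$ at Rx$2$ of $z_2$) before solving for $a$ (resp.\ $d$), whereas the paper uses no receive-side projection at all---it simply cancels the known $c$ and solves the resulting $N$-unknown system for $(a,b)$ (resp.\ $(d,e)$) jointly in the $N$-dimensional receive space; both reduce to the same generic full-rank condition, so the $2M+N$ count goes through either way.
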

\begin{proof}
Consider the coding scheme illustrated in Figure \ref{fig:z1z2_code_dual}, where $\tilde I_M$ and $\hat I_M$ comprise the first $N-M$ and last $2M-N$ columns of the identity matrix $I_M$, respectively.  Note that we simply send $a$ and $(b,c)$ on Tx$1$ and Tx$2$ of the $z_1$ topology without beamforming (or with the trivial identity beamforming), and $a,b,c$ consist of $M, N-M, 2M-N$ variables, respectively.  Similarly, $d$ and $(e,c)$ are sent on Tx$1$ and Tx$2$ of the $z_2$ topology, respectively.

At high SNR, clearly $(b,c)$ can be decoded reliably at Rx$2$ of the $z_1$ topology, and so can $(e,c)$ at Rx$1$ of the $z_2$ topology.  Hence we may cancel $c$ from the received signal at Rx$1$ of the $z_1$ topology, and solve $(a,b)$ reliably.  By the same token, $(d, e)$ can be retrieved at Rx$2$ of the $z_2$ topology.  This shows that $2M+N$ sum DoF is achievable (a.s.).
\end{proof}

\begin{figure}[t]
\centering
\includegraphics[width=9cm]{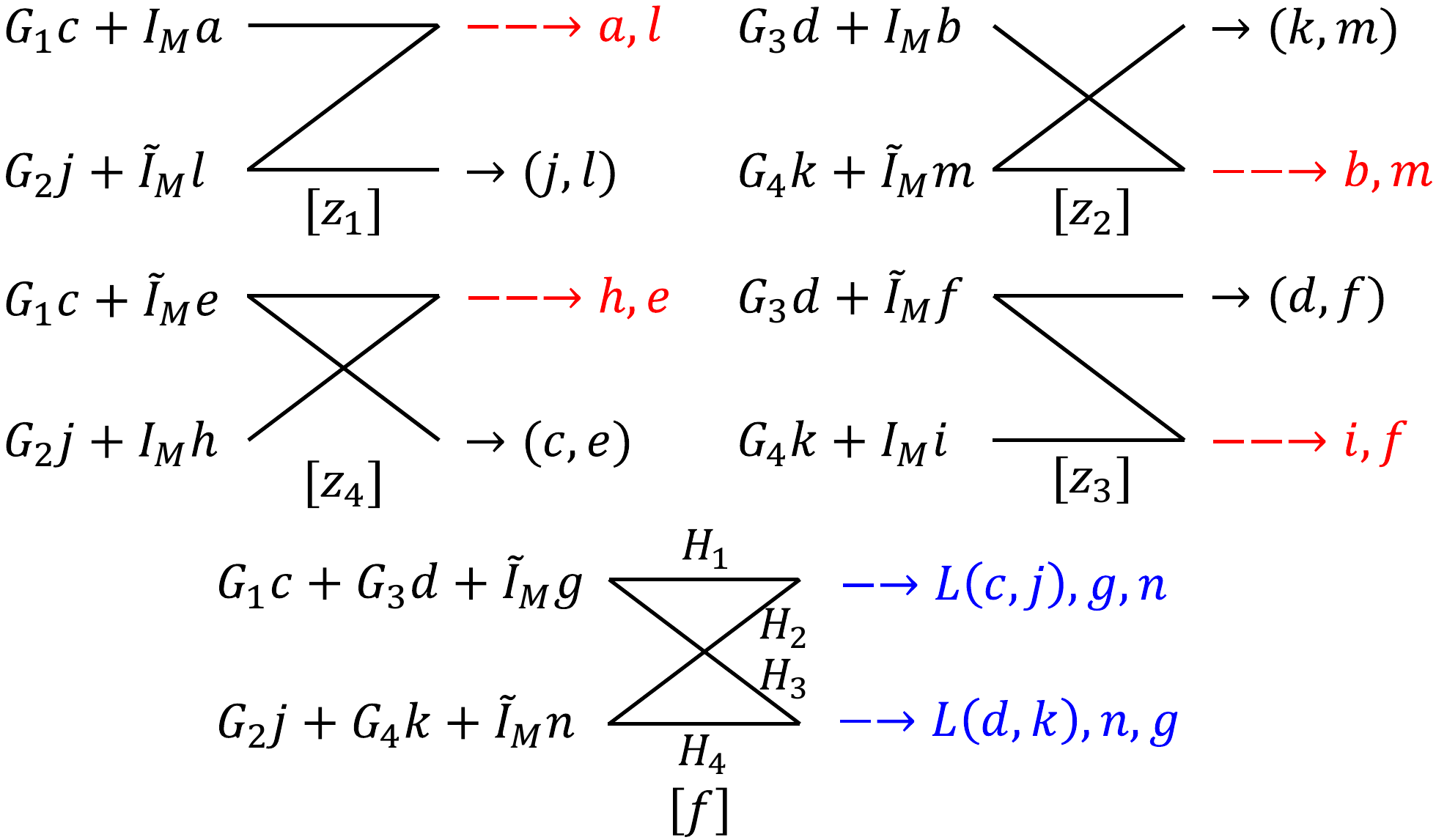}
\caption{Coding scheme for the $\{z,f\}$ parallel MIMO channel when $M \le N$.  ($\tilde I_M$: first $N-M$ columns of $I_M$, $G_i:M\times (2M-N)$ matrices satisfying $H_1G_1=H_2G_2$ and $H_3G_3=H_4G_4$. )}
\label{fig:zf_code_dual}
\end{figure}

\begin{table}[t]
\renewcommand{\arraystretch}{1.5}
\centering
\caption{Length of vectors in the $\{z,f\}$ coding scheme}
\label{tab:zf_vectors_dual}
\vspace{-2ex}
\begin{tabular}{|c|c|c|c|}
	\hline
	  \bfseries vector length & $M$ & $2M-N$ & $N-M$ \\ 
	\hline 
	& & & \\ [-3.5ex]
	\hline
	\bfseries Tx1 vectors & $a,b$ & $c,d$ & $e,f,g$ \\
	\hline
	\bfseries Tx2 vectors & $h,i$ & $j,k$ & $l,m,n$ \\
	\hline
\end{tabular}
\end{table}

\begin{lemma} \label{lemma4}
For the $\{z, f\}$ parallel MIMO channel consisting of the $z_1, z_2, z_3, z_4$ and $f$ topologies, $6M+2N$ sum DoF is achievable (a.s.), when $\frac{2}{3} <\frac{M}{N} \le 1$.
\end{lemma}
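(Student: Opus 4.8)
The plan is to mirror the proof of Lemma \ref{lemma2}, now using its dual coding scheme of Figure \ref{fig:zf_code_dual} with the vector lengths listed in Table \ref{tab:zf_vectors_dual}. As stressed in the remarks preceding Lemma \ref{lemma3}, the one structural change in the $M\le N$ regime is that the abundance of received dimensions ($N>M$) makes interference nulling beamforming superfluous: whereas Lemma \ref{lemma2} used $\phi_i$ filters to annihilate interference at the transmitters, here each receiver simply has spare dimensions to absorb it. I would still retain interference alignment \cite{JS08}, selecting the $M\times(2M-N)$ matrices $G_i$ so that $H_1G_1=H_2G_2$ and $H_3G_3=H_4G_4$. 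This is feasible a.s.: viewing the constraint as $[H_1 \mid -H_2]$ acting on the stacked $[G_1; G_2]$, the null space of the $N\times 2M$ matrix $[H_1 \mid -H_2]$ has dimension $2M-N$, which is positive precisely because $\frac{2}{3}<\frac{M}{N}$ forces $2M-N>\frac{1}{3}N$.

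First I would decode by successive interference cancellation in three steps, exactly parallel to Lemma \ref{lemma2}. In Step~1, at the receiver of each $z_i$-topology carrying a single incoming link, all streams on that link are solved reliably at high $P$, since $N$ receive dimensions comfortably hold the $M$ transmitted streams. In Step~2, I would decode both receivers of the $f$-topology: the streams recovered in Step~1 are subtracted, and the aligned pair (e.g.\ $c$ and $j$ through $H_1G_1=H_2G_2$) collapses onto a single combined vector $L(c,j)$, so that the surviving desired streams together with $L(c,j)$ fit within the $N$ available dimensions. In Step~3, I would revisit the other receiver of each $z_i$-topology, cancel the now-known $L(c,j)$, and solve the last desired vector there. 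Tallying the reliably decoded streams then yields $4M+4(2M-N)+6(N-M)=6M+2N$, as claimed.

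The main obstacle is the linear-independence bookkeeping: I must verify that, at every receiver and after each cancellation, the effective channel restricted to the surviving streams has full column rank, and that the dimension count never exceeds $N$. This is exactly where the hypothesis $\frac{2}{3}<\frac{M}{N}$ is consumed, as it guarantees that the aligned interference occupies few enough dimensions for all desired vectors to remain resolvable. Since the $H_i$ are drawn from a continuous distribution and the only imposed structure is the alignment constraints (which remove precisely the dimensions they should and no more), generic full rank holds almost surely, giving the a.s.\ achievability and completing the proof.
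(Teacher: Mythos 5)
Your proposal is correct and follows essentially the same route as the paper: the dual coding scheme of Figure \ref{fig:zf_code_dual} with the vector lengths of Table \ref{tab:zf_vectors_dual}, alignment via $H_1G_1=H_2G_2$, $H_3G_3=H_4G_4$ in place of nulling, the same three-step successive interference cancellation, and the identical tally $4M+4(2M-N)+6(N-M)=6M+2N$. The null-space feasibility argument for the $G_i$ and the genericity remark are welcome additions that the paper leaves implicit.
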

\begin{proof}
We use the achievability scheme depicted in Figure \ref{fig:zf_code_dual}, where $\tilde I_M$ again consists of the $N-M$ columns of $I_M$. On the other hand, $G_i$'s are full-rank matrices of dimensions $M\times (2M-N)$ and satisfy $H_1G_1=H_2G_2$ and $H_3G_3=H_4G_4$.  The number of variables contained in $a,b,c,\cdots,n$ vectors are indicated in Table \ref{tab:zf_vectors_dual}.  The decoding closely parallels the 3-step successive interference cancellation procedure in the proof of Lemma \ref{lemma2}:

Step 1 is the same as in the proof of Lemma \ref{lemma2}.

Step 2 is very similar to that of the proof of Lemma \ref{lemma2}, with the only difference being the observation that both $g$ and $n$ can be decoded at either receivers of the $f$ topology due to the extra receiver dimensions afforded by $N\ge M$.

Step 3 is also similar.  The only difference again is that more variables can be decoded at each receiver.  Take Rx$1$ of the $z_1$ topology for example.  $L(c,j)$ is decoded in Step 2 and can be canceled, so only $a$ and $l$ remain, and both can be decoded since we have $N$ antennas at Rx$1$.  Similar arguments hold at the other receivers.

Therefore, $a,b,c,\cdots,n$ can all be reliably solved at high SNR, proving the achievability of $6M+2N$ sum DoF (a.s.).
\end{proof}

With Lemmas \ref{lemma3} and \ref{lemma4}, it is straightforward to complete the proof of the achievability of Theorem \ref{thm:DoF} and $\eta^\mathrm{lb}$ of Theorem \ref{thm:UBLB} for $M\le N$, with obvious arguments dual to those in Section \ref{Achi}.

\fi

%

\IEEEtriggeratref{3}



\end{document}